\newtheorem{theorem}{Theorem}
\begin{document}

\title{Human-AI Collaboration in Cloud Security: Cognitive Hierarchy-Driven Deep Reinforcement Learning}


\author{Zahra Aref}
\affiliation{%
  \institution{Rutgers University}
  \department{Electrical and Computer Engineering}
  \city{New Brunswick}
  \state{NJ}
  \country{USA}}
\email{zahra.aref@rutgers.edu}
\orcid{0009-0002-9665-2515}

\author{Sheng Wei}
\affiliation{%
  \institution{Rutgers University}
  \department{Electrical and Computer Engineering}
  \city{New Brunswick}
  \state{NJ}
  \country{USA}}
\email{sheng.wei@rutgers.edu}
\orcid{}

\author{Narayan B. Mandayam}
\affiliation{%
  \institution{Rutgers University}
  \department{WINLAB, Electrical and Computer Engineering}
  \city{New Brunswick}
  \state{NJ}
  \country{USA}}
\email{narayan@winlab.rutgers.edu}
\orcid{}

\renewcommand{\shortauthors}{Aref et al.}

\begin{abstract}
Given the complexity of multi-tenant cloud environments and the growing need for real-time threat mitigation, Security Operations Centers (SOCs) must integrate AI-driven adaptive defense mechanisms to counter Advanced Persistent Threats (APTs) effectively. However, SOC analysts face challenges in countering adaptive adversarial tactics, requiring intelligent decision-support frameworks. 
To enhance human-AI collaboration in SOCs, we propose a Cognitive Hierarchy Theory-driven Deep Q-Network (CHT-DQN) framework that models the interactive decision-making process between SOC analysts and AI-driven APT bots. Our approach assumes that the SOC analyst (defender) operates at cognitive level-1, anticipating attacker strategies, while the APT bot (attacker) follows a level-0 policy, making naive exploitative decisions. By incorporating CHT into DQN, the framework enhances adaptive SOC defense strategies using Attack Graph (AG)-based reinforcement learning.
Through extensive simulation experiments across varying attack graph complexities, the proposed model consistently achieves higher average weighted data protection and reduced action discrepancies, particularly when SOC analysts utilize CHT-DQN. A theoretical lower bound analysis further validates the superiority of CHT-DQN over standard DQN, demonstrating that as AG complexity increases, CHT-DQN yields higher Q-value performance. A comprehensive human-in-the-loop (HITL) evaluation involving IT security experts was conducted on Amazon Mechanical Turk (MTurk). The findings indicate that SOC analysts who employ CHT-DQN-driven transition probabilities demonstrate greater alignment with adaptive DQN attackers, thereby achieving enhanced data protection. Moreover, human decision patterns align with Prospect Theory (PT) and Cumulative Prospect Theory (CPT): participants tend to avoid reselection after failure but persist after success. This asymmetry reflects heightened loss sensitivity and biased probability weighting—underestimating gains after loss and overestimating continued success.
The findings highlight the significant potential of integrating cognitive modeling into deep reinforcement learning to improve Security Operations Center functions. This combination presents a promising avenue for developing real-time, adaptive mechanisms for cloud security.
\end{abstract}

\begin{CCSXML}
<ccs2012>
   <concept>
       <concept_id>10002978.10002997</concept_id>
       <concept_desc>Security and privacy~Intrusion/anomaly detection and malware mitigation</concept_desc>
       <concept_significance>500</concept_significance>
       </concept>
   <concept>
       <concept_id>10010147.10010178.10010216.10010217</concept_id>
       <concept_desc>Computing methodologies~Cognitive science</concept_desc>
       <concept_significance>500</concept_significance>
       </concept>
   <concept>
       <concept_id>10010147.10010257.10010258.10010261.10010275</concept_id>
       <concept_desc>Computing methodologies~Multi-agent reinforcement learning</concept_desc>
       <concept_significance>500</concept_significance>
       </concept>
 </ccs2012>
\end{CCSXML}

\ccsdesc[500]{Security and privacy~Intrusion/anomaly detection and malware mitigation}
\ccsdesc[500]{Computing methodologies~Cognitive science}
\ccsdesc[500]{Computing methodologies~Multi-agent reinforcement learning}

\keywords{Security Operations Centers (SOCs), Cloud Security, Human-AI Collaboration, Cognitive Hierarchy Theory (CHT), Deep Reinforcement Learning (DRL), Attack Graphs (AGs), Advanced Persistent Threats (APTs),  Human-in-the-Loop (HITL), Multi-Agent Systems.}


\maketitle

\section{Introduction}

Security Operations Centers (SOCs) are the backbone of modern cybersecurity, safeguarding cloud infrastructures against Advanced Persistent Threats (APTs) by integrating threat detection, incident response, and continuous monitoring across multi-cloud and multi-tenant environments. They employ Security Information and Event Management (SIEM), machine learning, and automated threat-hunting tools to detect cyber threats, analyze attack patterns, and identify system vulnerabilities~\cite{chhetri2024humanAI}. However, SOCs face increasing operational challenges, including overwhelming alert volumes, compliance complexities, vendor dependencies, and the rising sophistication of AI-driven adversarial threats\cite{schmidt2023distributedSOC}. These challenges are exacerbated in cloud environments, where the attack surface expands dynamically due to virtualized resources, shared infrastructures, and evolving cyber threat landscapes\cite{muniz2021modern, peiris2021threat}.

Despite advancements in AI-driven security tools, SOC analysts continue to struggle with high false positive rates, cognitive fatigue, and a shortage of skilled cybersecurity professionals, limiting their ability to respond efficiently to complex attacks\cite{chhetri2023riskaware}. Prior research has demonstrated that subjectivity in human decision-making can significantly impact the effectiveness of DRL-based security defenses, particularly in cloud storage environments\cite{aref2022impact}. To mitigate these challenges, AI-augmented SOC frameworks must integrate human cognitive insights with machine-learning-driven adaptive defenses, enabling real-time decision-making and proactive adversarial reasoning\cite{chhetri2022alert, tariq2024A2C}. This human-AI collaboration is essential to enhancing SOC operations, as it combines human intuition and strategic thinking with AI’s ability to analyze vast amounts of threat intelligence data in real-time~\cite{prebot2023learning, zimmermann2019moving}.

This work incorporates Cognitive Hierarchy Theory (CHT) into Deep Reinforcement Learning (DRL) to model adversarial strategies in cloud security defense. Unlike traditional game-theoretic equilibrium-based approaches, which assume perfect rationality, CHT models adversaries at different cognitive levels, better reflecting the bounded rationality seen in real-world cyber conflicts~\cite{camerer2004cognitive, kleiman2016coordinate}. In our framework, the SOC analyst (defender) at cognitive level-1 anticipates APT attacker strategies at level-0, where the attacker employs exploitative decision-making but lacks deeper strategic foresight. By infusing CHT into DQN, we enable SOC analysts to predict attacker behaviors and dynamically adjust defense mechanisms in real time, enhancing threat anticipation and mitigation.

To support structured adversarial modeling, we integrate AGs, a widely used method in cyber threat intelligence, to map attack progression, visualize vulnerabilities, and optimize SOC decision-making\cite{wang2008attack, nadeem2021sage, ibrahim2019attack}. While AGs are traditionally static, their integration with CHT-driven DRL allows for dynamic threat modeling, equipping SOC analysts with real-time attack prediction capabilities and adaptive countermeasure optimization\cite{xrl2023userstudy}.

Additionally, Human-in-the-Loop (HITL) models have proven effective in SOCs, as they align AI-driven threat detection with human decision-making heuristics, improving response efficiency and security posture\cite{crabtree2019cyber, gratian2018correlating}. HITL enhances SOC analyst trust in AI systems, reduces false positives, and facilitates adaptive learning based on human feedback\cite{zimmermann2019moving, katakwar2023attackers}. By embedding HITL into our CHT-augmented DRL framework, we create a human-AI collaboration paradigm, ensuring that SOC analysts and AI agents work together effectively in cloud security operations.

Our research contributions are summarized as follows:
\begin{itemize}
\item We introduce a human-AI collaborative framework that leverages Attack Graphs (AGs) to provide SOC analysts with a visual and analytical representation of cloud infrastructure vulnerabilities. This framework incorporates a human-in-the-loop (HITL) approach to facilitate real-time, adaptive defense strategies against evolving cyber threats.
\item Our model employs Deep Reinforcement Learning (DRL) to simulate a multi-agent interaction between an AI-driven attacker and a human-AI SOC analyst, enhancing real-time decision-making in SOC operations against Advanced Persistent Threats (APTs).
\item By integrating Cognitive Hierarchy Theory (CHT) into Deep Q-Networks (DQN), we develop a hierarchical reasoning framework that models multi-level adversarial strategies, improving SOC analysts' ability to anticipate and counter sophisticated attack behaviors. This framework is evaluated in four distinct scenarios, where the SOC analyst employs CHT-DQN and DQN strategies against the attacker using DQN and random selection.
\item We propose and validate a theoretical performance guarantee, proving via induction that as the AG size increases, the SOC analyst’s Q-value under the CHT-DQN policy consistently outperforms the standard DQN policy, assuming a stationary attack strategy. This result formalizes the robustness of cognitive-aware AI models in cloud security defense.
\item Our study includes two human-interactive, web-based DRL experiments on Amazon Mechanical Turk (MTurk), where human SOC analysts defend cloud infrastructure against AI-driven DQN attackers. By comparing reward-based decision-making with CHT-DQN-driven transition probability insights, we validate the practical benefits of human-AI collaboration in SOCs for real-time threat mitigation.
\end{itemize}

The remainder of this paper is organized as follows. Section 2 reviews related work on deep reinforcement learning for SOC defense, attack graphs in cybersecurity, cognitive models in cyber defense, and human-AI collaboration in SOC operations. Section 3 provides the necessary background on deep Q-learning, CHT, and attack graphs as applied to SOC decision-making. Section 4 introduces our proposed CHT-driven DQN framework, detailing its integration with attack graphs for adaptive SOC defense. Section 5 describes our experimental setup, including both simulation-based evaluations and human-in-the-loop experiments conducted on Amazon Mechanical Turk (MTurk). Section 6 presents experimental results, comparing the performance of CHT-DQN with standard DQN in SOC environments and analyzing human decision-making behaviors in cybersecurity contexts. Finally, Section 7 concludes the paper with key findings, implications for AI-augmented SOCs, and directions for future research in cloud security.
\section{Related Work}

Cybersecurity research increasingly employs advanced computational models to address the growing complexity of cyber threats, particularly APTs in cloud environments. Modern SOCs integrate machine learning (ML), automated threat-hunting, and Security Information and Event Management (SIEM) tools to detect and mitigate threats in real-time~\cite{peiris2021threat, muniz2021modern}. However, challenges such as alert fatigue, vendor dependencies, compliance issues, and the increasing sophistication of AI-driven adversaries necessitate more intelligent, proactive security frameworks.

Recent studies highlight the importance of human-AI collaboration in SOC operations, where AI-powered decision-support systems complement human analysts in managing cyber threats efficiently~\cite{chhetri2024humanAI}. This has led to the exploration of Deep Reinforcement Learning (DRL) for adaptive defense, AGs for attack pathway visualization, CHT for adversarial reasoning, and HITL strategies for SOC decision-making. This section reviews key research in these areas, positioning our CHT-driven DRL framework within the existing cybersecurity landscape.

\subsection{Deep Reinforcement Learning for SOCs and APT Defense}
Deep Reinforcement Learning (DRL) has emerged as a powerful approach for cyber threat detection and mitigation in SOCs. DRL enables adaptive defense strategies, where an AI agent continuously learns from adversarial interactions to improve its decision-making.
The survey by Nguyen et al.\cite{nguyen2021deep} explores DRL’s applicability to cyber-physical systems, intrusion detection, and multi-agent security frameworks, discussing various DRL approaches such as value-based methods (DQN), policy gradients, and actor-critic frameworks. Other studies apply DRL to real-time cloud security, demonstrating its potential for resource allocation and adaptive defense against APTs\cite{min2018defense}. Beyond cybersecurity, DRL has also been explored for optimizing design verification workflows, where RL enables efficient test generation and automated verification of complex hardware and software systems\cite{aref2024advanced}. These applications highlight DRL’s versatility in optimizing decision-making under uncertainty, reinforcing its suitability for adaptive cyber defense.
More specific to cloud-based SOCs, the work in~\cite{sakthivelu2023adaptive} introduces a DRL-based CPU allocation model for cloud storage defense, using a Colonel Blotto game framework to counter APT attacks. While effective in storage optimization, it does not consider broader threat adaptation across SOC environments.

Additionally, Chhetri et al.~\cite{chhetri2023riskaware} propose a risk-aware AI model for cloud cost optimization, incorporating dynamic threat assessment mechanisms to manage cloud security risks effectively. This aligns with our work, which enhances SOC defenses by integrating CHT with DRL to model adversarial reasoning and proactive threat mitigation in cloud infrastructures.

\subsection{Attack Graphs in Cybersecurity and Cloud Defense}
AGs can play a crucial role in SOC threat modeling, helping analysts visualize, analyze, and predict the movement of adversaries within cloud infrastructures. AGs provide a structured representation of cyber threats, allowing security teams to prioritize vulnerabilities and allocate resources strategically~\cite{wang2008attack, nadeem2021sage, ibrahim2019attack}.
Several studies explore AG-based threat intelligence. For example, Sheyner et al.\cite{sheyner2002automated} automate AG generation for network vulnerability analysis, while Wu et al.\cite{wu2022research} use AGs to evaluate power communication networks, prioritizing risks based on security metrics. Additionally, Sengupta et al.~\cite{sengupta2019general} propose an AG-driven Markov game framework for APT detection in cloud networks, leveraging Common Vulnerability Scoring System (CVSS) metrics for dynamic threat modeling.
In multi-cloud landscapes, Schmidt et al.~\cite{schmidt2023distributedSOC} investigate distributed SOC architectures, emphasizing AG-based threat modeling and adaptive defense planning. However, existing AG-based models focus primarily on reactive threat detection rather than proactive, AI-driven defense mechanisms. Our framework extends AG applications by integrating them with CHT and DRL, enabling real-time SOC decision-making against evolving APTs.

\subsection{Cognitive Models in Cyber Defense}
CHT and behavioral game theory have gained traction in cybersecurity research, providing insights into human-like decision-making in adversarial settings~\cite{camerer2004cognitive, kleiman2016coordinate}. Unlike traditional Nash equilibrium-based models, which assume perfect rationality, CHT recognizes that attackers and defenders operate at different levels of cognitive reasoning, leading to non-equilibrium decision patterns.

In cloud security, Aref et al.\cite{aref2022impact} apply Prospect Theory to model defender decision-making, demonstrating that subjective risk perception can enhance defense strategies in resource-constrained cloud environments. Similarly, Xiao et al.\cite{xiao2018attacker} formulate an APT detection game, showing that attackers exhibit risk-seeking behaviors when facing aggressive defense mechanisms.
CHT is also applied in cyber-physical security, where Mavridis et al.\cite{mavridis2023attack} use level-k reasoning to anticipate APT behaviors, while Sanjab et al.\cite{sanjab2020game} analyze UAV cybersecurity through cumulative Prospect Theory, capturing adversarial risk dynamics.

Our research builds upon these models by integrating CHT with DRL in a cloud SOC framework, allowing SOC analysts to predict APT strategies and dynamically adjust defenses based on multi-level adversarial cognition.

\subsection{Human-AI Collaboration and Human-in-the-Loop SOCs}
Human-AI collaboration (HITL) is essential for enhancing SOC efficiency, as AI-driven automation alone cannot fully replace human expertise in cybersecurity operations. AI models must align with human cognitive processes to improve SOC decision-making, reduce false positives, and enhance analyst interpretability~\cite{tariq2024A2C, chhetri2024humanAI}.
Several studies highlight HITL in SOCs. Chhetri et al.\cite{chhetri2022alert} propose an AI-driven alert prioritization system to reduce cognitive overload in SOC analysts, improving threat response efficiency. Furthermore, Tariq et al.\cite{tariq2024A2C} introduce a modular AI-human decision framework that optimizes SOC workflow automation while ensuring human oversight in high-risk decision-making.

Explainable AI (XAI) is another key area in human-AI collaboration. A recent study explores how explainable reinforcement learning models improve SOC analyst trust and decision accuracy in real-time security operations~\cite{xrl2023userstudy}. Collaborative AI learning frameworks, such as the survey in \cite{collaborative2023survey}, further, emphasize the role of human-AI teaming in security environments.
By integrating HITL with our CHT-DQN model, we enable SOC analysts to interact with AI-driven security tools in real time, leveraging human cognitive intuition alongside machine-learning automation for adaptive cloud defense strategies.
\section{Preliminaries}

This study integrates human computation principles into cloud security to defend against APTs using Cognitive Hierarchy Theory (CHT) in Deep Reinforcement Learning (DRL).

\subsection{Deep Q-Learning}
Deep Q-Learning approximates the Q-function, which estimates the expected cumulative reward of taking an action in a given state under policy $\pi$. The environment is modeled as a Markov Decision Process (MDP) defined by $\mathcal{M} = (S, A, T, \gamma, R)$, where $S$ is the state space, $A$ is the action space, $T$ represents transition probabilities, $\gamma$ is the discount factor, and $R$ is the reward function. The goal is to learn an optimal policy $\pi^*$ that maximizes the expected discounted return:
\begin{equation}
    Q_{\pi}(s, a;\boldsymbol{\theta}) = \mathbb{E}_{\pi}\left[\sum_{k=0}^{\infty} \gamma^k R(s_k, a_k, s_{k+1}) \mid s_0 = s, a_0 = a\right].
\end{equation}
Deep Q-Networks (DQNs) approximate $Q^*(s, a; \boldsymbol{\theta})$ using a neural network with parameters $\boldsymbol{\theta}$, trained iteratively to minimize the Bellman error:
\begin{equation}
    L(\boldsymbol{\theta}) = \mathbb{E} \Big[(y - Q(s, a; \boldsymbol{\theta}))^2 \Big],
\end{equation}
where the target value $y$ is defined as $y = R(s, a, s') + \gamma \max_{a'} Q(s', a'; \boldsymbol{\theta^-})$, and $\boldsymbol{\theta^-}$ represents a delayed copy of the Q-network parameters for stable updates.

\subsection{Cognitive Hierarchy Theory and K-Level Reasoning}
Cognitive Hierarchy Theory (CHT) models strategic reasoning by assuming that decision-makers (agents) operate at different levels of thinking, where a level-$k$ agent optimizes its decisions based on the expected behavior of agents at levels $0$ to $k-1$.

\textbf{Definition 1 (Level-0 Policy).} Level-0 agents select actions randomly or based on simple heuristics.

\textbf{Definition 2 (Level-$k$ Policy).} A level-$k$ agent assumes that opponents reason up to level $k-1$ and selects the best response accordingly.

For a player $i$ with strategy set $m_i$, the expected utility when selecting strategy $s_i^j$ at level-$k$ is given by:
\begin{equation}
    E_k(\pi_i(s_i^j)) = \sum_{j'=1}^{m_{-i}} \pi_i(s_i^j, s_{-i}^{j'}) \sum_{h=0}^{k-1} g_k(h) \cdot P_h(s_{-i}^{j'}),
    \label{eq:expected_payoff}
\end{equation}
where $\pi_i(s_i^j, s_{-i}^{j'})$ represents the payoff of player $i$ given opponent state $s_{-i}^{j'}$. The probability $P_h(s_{-i}^{j'})$ captures the likelihood of level-$h$ opponents selecting strategy $s_{-i}^{j'}$, and $g_k(h)$ is a Poisson-distributed weighting function. The best response strategy satisfies:
\begin{equation}
    P_k(s_i^\ast) = \mathbbm{1}\left(s_i^\ast = \arg\max_{s_i^j} E_k(\pi_i(s_i^j))\right).
\end{equation}

\begin{figure}[t]
    \centering
    \begin{subfigure}[b]{0.4\textwidth}
        \includegraphics[width=\textwidth]{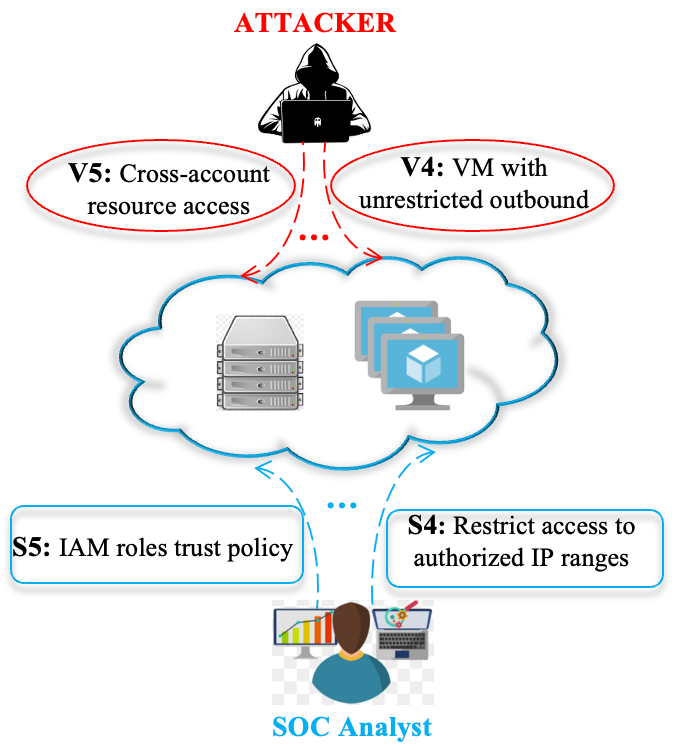}
        \caption{\textbf{AI-Augmented SOC Framework.} The attacker exploits vulnerabilities (V4, V5), while the SOC analyst deploys defensive strategies (S4, S5) to mitigate threats.}
        \label{fig:system_model}
    \end{subfigure}
    \hfill
    \begin{subfigure}[b]{0.4\textwidth}
        \includegraphics[width=\textwidth]{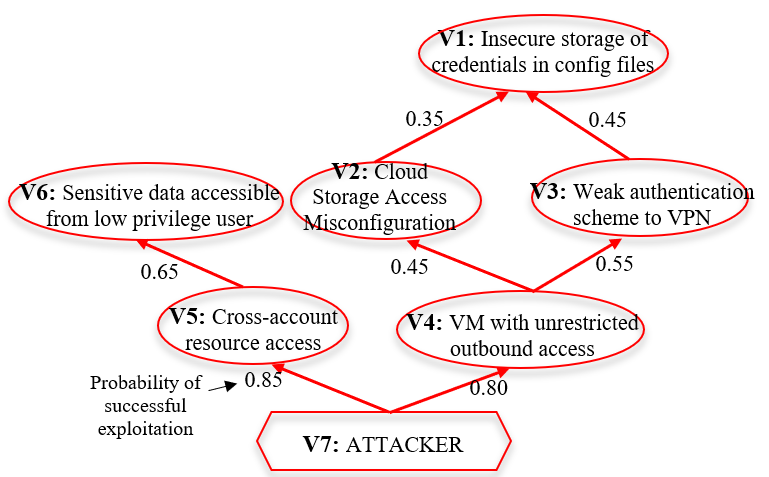} 
        \caption{\textbf{Example of a Cloud Attack Graph.} This diagram illustrates vulnerabilities (V1-V6) and probabilistic exploit paths within a cloud infrastructure.}
        \label{fig:AG}
    \end{subfigure}
    \caption{Illustrations of real-time SOC decision-making using attack graphs (AGs).}
\end{figure}
\section{CHT-driven DQN Cloud Data Defense}
In the domain of Security Operations Centers (SOCs) for cloud security, the convergence of human-AI collaboration, cognitive modeling, and machine learning presents a promising avenue for real-time threat mitigation and adaptive defense strategies. This paper introduces a novel defensive algorithm that infuses Cognitive Hierarchy Theory (CHT) into Deep Q-Learning (DQN), enabling SOC analysts to make real-time, risk-aware decisions against evolving cyber threats.

The proposed CHT-DQN algorithm operates within a stochastic game framework 
$(S, A_\mathcal{D}, A_\mathcal{A}, T, r_{\mathcal{D}}, r_{\mathcal{A}})$, 
characterized by a state space $S$, action spaces $A_{\mathcal{D}}$ and $A_{\mathcal{A}}$ 
for the SOC analyst (defender) and APT attacker, and a state-transition function: $T$.
This function, $T: S \times A_{\mathcal{D}} \times A_{\mathcal{A}} \to \Delta(S)$, 
determines the transition probabilities to a new state $s'$, given the real-time 
defensive actions taken by the SOC analyst ($a_\mathcal{D}$) and the attacker ($a_\mathcal{A}$):
\begin{equation}
    T(s, a_{\mathcal{D}}, a_{\mathcal{A}}, s') = \mathbb{P}(s' | s, a_{\mathcal{D}}, a_{\mathcal{A}}).
\end{equation}
The reward functions $r_{\mathcal{D}}: S \times A_{\mathcal{D}} \times S \to \mathbb{R}$ 
and $r_{\mathcal{A}}: S \times A_{\mathcal{A}} \times S \to \mathbb{R}$ define 
the respective utility structures for the SOC analyst and the attacker.
These reward formulations integrate cognitive modeling and risk-sensitive decision-making 
to support real-time security strategies in SOC environments.
The discount factor $\gamma \in [0,1]$ determines the relative importance of future rewards,
balancing immediate response actions with long-term security optimization.

Figure~\ref{fig:system_model} presents an AI-augmented SOC framework utilizing Attack Graphs (AGs) for real-time security operations. The attacker, depicted at the top, attempts to exploit cloud infrastructure vulnerabilities such as V4: Virtual Machine (VM) with unrestricted outbound access and V5: Cross-account resource access to compromise cloud resources. Meanwhile, the SOC analyst at the bottom, supported by Cognitive Hierarchy Theory-driven Deep Q-Network (CHT-DQN), dynamically implements countermeasures like S4: Restricting access to authorized IP ranges and S5: IAM roles trust policy to defend against security breaches. This interaction models the adversarial dynamics in cloud security, allowing SOCs to optimize resource allocation and enhance real-time threat mitigation.

\subsection{Attack Graph}
The Attack Graph (AG) is central to SOC-driven decision-making, offering a structured representation of network vulnerabilities and potential attack trajectories~\cite{wang2008attack,nadeem2021sage,ibrahim2019attack,chu2010visualizing, poolsappasit2011dynamic}. 
The SOC analyst can minimize security risks and get equipped for potential cyber-attacks in advance by evaluating the AG. 
Within an AI-augmented SOC, attack graphs serve as real-time intelligence tools, assisting SOC analysts in:
\begin{itemize}
\item Anticipating potential threats based on historical attack data.
\item Prioritizing incident response by estimating the likelihood of successful attacks.
\item Improving real-time security decision-making through automated risk quantification and adaptive threat modeling.
\end{itemize}
By leveraging the attack graph, SOC analysts can preemptively adjust security controls, dynamically deploy countermeasures, and minimize attack impact in real-time.

An attack graph~\cite{husak2018survey}, $\mathcal{G}$, is formally defined as the tuple:
\begin{equation}
    \mathcal{G} = (\mathcal{V}, \mathcal{E}, \mathbb{P}),
\end{equation}
where:
\begin{itemize}
    \item $\mathcal{V} = \{v_1, v_2, ..., v_N\}$ is the set of nodes, representing vulnerabilities or threat attributes in cloud environments.
    \item $\mathcal{E} \subseteq \mathcal{V} \times \mathcal{V}$ is the set of directed edges, termed exploits, which define possible attack paths.
    \item $\mathbb{P} = \{\mathbb{P}(v) \mid v \in \mathcal{V}\}$ is the set of exploitation probabilities, where each $\mathbb{P}(v) \in [0,1]$ quantifies the likelihood of successful exploitation of vulnerability $v$.
\end{itemize}

In our model, the attack graph $\mathcal{G}$ is dynamically updated at fixed intervals using a Maximum Likelihood Estimation (MLE) approach:
\begin{equation}
    \mathbb{P}(v) = \frac{\phi(v)}{\sum_{v' \in \mathcal{V}} \phi(v')},
\end{equation}
where $\phi(v)$ denotes the observed exploitation frequency of node $v$ over time.

This ensures that exploitation probabilities $\mathbb{P}(v)$ reflect the most recent attack-defense interactions. 
By continuously updating $\mathbb{P}(v)$, our framework dynamically captures adversarial behaviors, 
allowing SOC analysts to make data-informed security decisions based on evolving attack trends.
This real-time adaptability enhances the evaluation of defensive reinforcement learning algorithms in a controlled, yet dynamic, cyber threat landscape.

\subsection{Model Dynamics and Utility Functions}
We model the SOC environment as a stochastic game where each cloud system node $i$ in the attack graph (AG) has a binary security state representing its compromise status. Let  $s^t = [s_1^t, s_2^t, ..., s_N^t]^\top$  be the security state vector at time $t$, where $s_i^t \sim \text{Bern}(p_i)$ indicates whether node $i$ is compromised ($s_i^t = 1$) or secure ($s_i^t = 0$).

The AG representation of the cloud system is illustrated in Figure~\ref{fig:AG}. The hexagonal node represents the attacker, while oval nodes represent cloud system vulnerabilities. The attacker node (7) serves as the parent of vulnerabilities (4) and (5), indicating possible exploit paths.

To enable real-time SOC decision-making, our AG  quantifies exploitation likelihoods, allowing adaptive SOC response strategies. Given security mechanisms—including firewalls, access control policies, and cryptographic techniques—SOC analysts dynamically allocate defense resources to optimize security while minimizing operational costs.

\textbf{State-Action Representation:}
At each time step $t$, the APT attacker selects a node  $a_{\mathcal{A}}^t$  from the attack graph $\mathcal{G}$ to exploit, where:
$a_{\mathcal{A}}^t \in \mathcal{G}$. Each vulnerability in the cloud infrastructure corresponds to an AG node, representing:
\begin{itemize}
    \item Weak authentication $\&$ access control
    \item Security misconfiguration $\&$ software vulnerabilities
    \item Network-based threats (e.g., botnets, spoofing, VM co-location)
    \item Storage-based attacks $\&$ application exploits~\cite{rakotondravony2017classifying, ullah2018data}.
\end{itemize}

Conversely, the SOC analyst, leveraging  CHT-driven DQN, selects a node $a_{\mathcal{D}}^t$  to protect: $a_{\mathcal{D}}^t \in \mathcal{G}$.
Defensive strategies include:
\begin{itemize}
    \item Intrusion detection $\&$ cloud-based antivirus~\cite{coppolino2017cloud}
    \item Cryptographic authentication $\&$ VM integrity monitoring~\cite{rakotondravony2017classifying}
    \item Storage attack mitigation via security monitoring~\cite{tari2014security}
    \item Binary code analysis for exploit prevention~\cite{lee2017cloudrps}.
\end{itemize}

\textbf{State Transition Model:}
The next  cloud security state  is determined by:
\begin{equation}
    s^{t+1} = \left\lbrace[1 - \delta^{t}_i]_{1 \leq i \leq N} \right\rbrace,
\end{equation}
where  $\delta^t_i$  represents whether node $i$ was compromised at time step $t$:
\begin{equation}
    \delta^t_i (a_{\mathcal{A},i}, a_{\mathcal{D},i}) = \left\{
    \begin{array}{ll}
        0, & \text{if } a^t_{\mathcal{A},i} = 1 \text{ and } a^t_{\mathcal{D},i} = 0 \\
        1, & \text{otherwise}
    \end{array}
    \right.
\end{equation}
The attack graph probabilities $\mathbb{P}(s'|s,a_{\mathcal{A}}, a_{\mathcal{D}})$ are updated dynamically based on the success rate of prior attacks. This ensures SOC analysts continuously refine their defense strategies based on observed historical attack patterns.

\textbf{Utility Functions (Reward Structures):}
To model the cost-benefit trade-offs in SOC operations, we define utility functions for both the  APT attacker and  SOC analyst.

The attacker's utility function at time $t$ considers the estimated  data exfiltration  and the  cost of attack :
\begin{equation}
    u^t_{\mathcal{A}} = \sum_{i=1}^{N} \bigg[ 
        \big( \alpha_{\mathcal{A},i} \hat{b}_i - \beta_{\mathcal{A},i} c_{\mathcal{A},i} \big) \cdot \Big(1 - \delta^t_i \Big)
    - \big( \alpha_{\mathcal{A},i} \hat{b}_i + \beta_{\mathcal{A},i} c_{\mathcal{A},i} \big) \cdot \delta^t_i \bigg].
    \label{eq:u_a}
\end{equation}
where $\hat{b}_i$ represents the attacker's estimated data volume at node $i$, while $c_{\mathcal{A},i}$ denotes the computational cost of exploiting the vulnerability. The parameter $\alpha_{\mathcal{A},i}$ quantifies the severity of data exfiltration, and $\beta_{\mathcal{A},i}$ controls the trade-off between attack success probability and resource expenditure. The attacker aims to exploit nodes with high expected data payoff while adapting strategies to SOC defenses.

The SOC analyst optimizes their defense policy $\pi_{\mathcal{D}}$ to maximize expected cloud security while minimizing defense expenditures. The reward function at time $t$ is defined as:
\begin{equation} u^t_{\mathcal{D}} = \sum_{i=1}^{N} \Big[ \big( \alpha_{\mathcal{D},i} b_i - \beta_{\mathcal{D},i} c_{\mathcal{D},i} \big) \cdot \delta^t_i - \big( \alpha_{\mathcal{D},i} b_i + \beta_{\mathcal{D},i} c_{\mathcal{D},i} \big) \cdot (1 - \delta^t_i) \Big]. \label{eq:u_d} \end{equation}
where $b_i$ represents the actual data stored at node $i$, and $c_{\mathcal{D},i}$ denotes the cost of securing node $i$. The parameters $\alpha_{\mathcal{D},i}$ and $\beta_{\mathcal{D},i}$ serve as trade-off coefficients balancing data protection benefits against resource expenditures. The SOC analyst strategically selects actions $a^t_{\mathcal{D}}$ to maximize long-term security gains by adjusting defenses based on real-time attack graphs and adversarial behaviors.

\begin{figure}[htbp]
    \centering
    \includegraphics[width=0.7\linewidth]{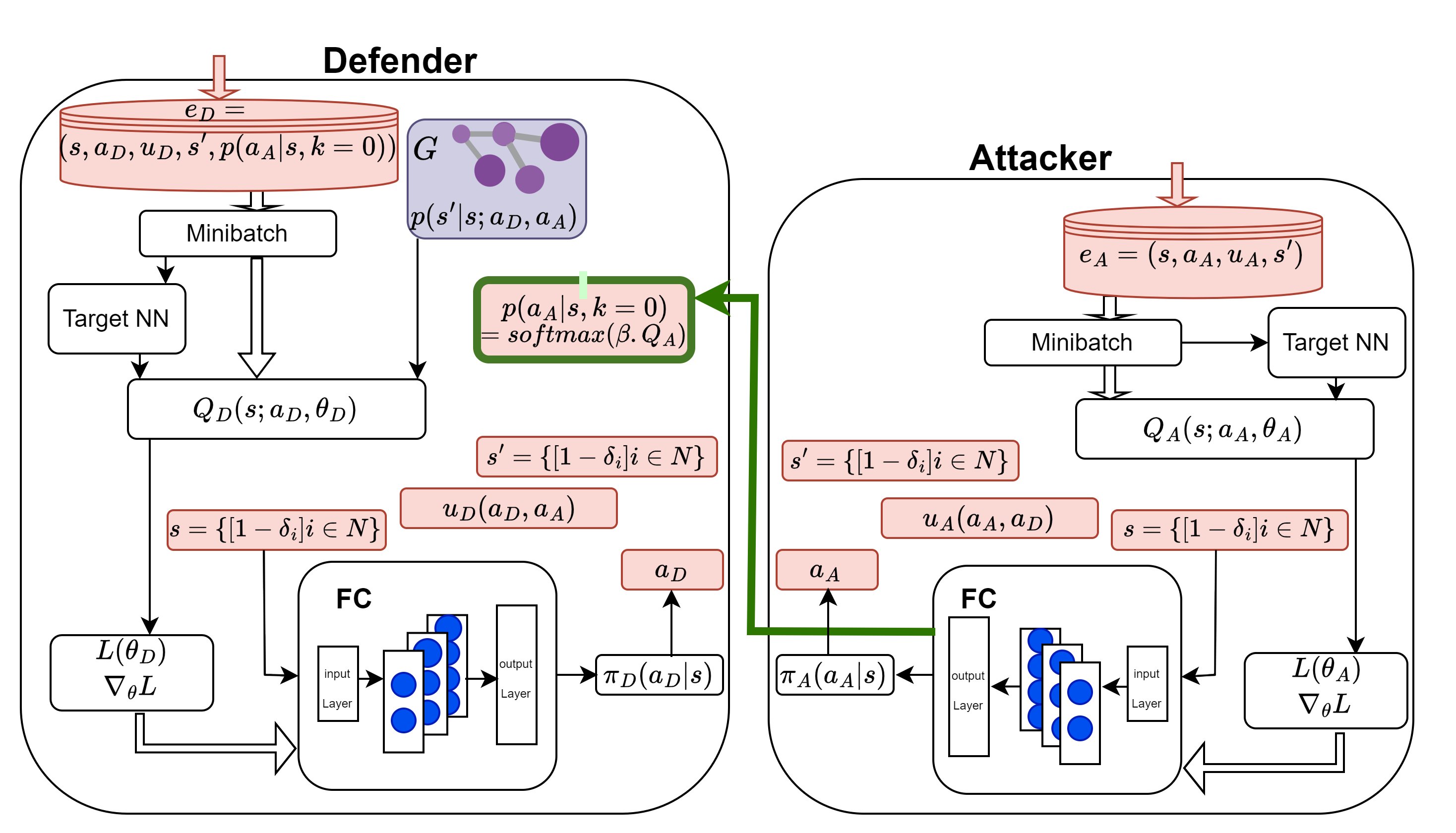}
    \caption{Overview of the proposed CHT-DQN framework for cloud security. 
    The framework models interactions between the SOC analyst (defender) and the APT attacker using AGs and deep reinforcement learning.
    The SOC analyst module includes experience replay storage ($e_{\mathcal{D}}$), minibatch sampling, and a target neural network (NN) for updating policy $\pi_{\mathcal{D}}(a_{\mathcal{D}}|s)$. 
    The SOC analyst computes $Q_{\mathcal{D}}(s, a_{\mathcal{D}}, \boldsymbol{\theta}_{\mathcal{D}})$ and applies a softmax function on $Q_{\mathcal{A}}$ to estimate the attacker's action probabilities $\mathbb{P}(a_{\mathcal{A}} | s, k = 0)$. 
    Similarly, the attacker module stores experiences ($e_{\mathcal{A}}$), samples minibatches, and optimizes its policy $\pi_{\mathcal{A}}(a_{\mathcal{A}} | s)$. 
    The forward connections highlight the SOC analyst’s predictive modeling of the attacker’s actions, leveraging the Cognitive Hierarchy Theory (CHT). 
    Both modules use fully connected (FC) layers to process state-action pairs and optimize loss functions $L_{\mathcal{D}}(\boldsymbol{\theta}_{\mathcal{D}})$ and $L_{\mathcal{A}}(\boldsymbol{\theta}_{\mathcal{A}})$.}
    \label{fig:model}
\end{figure}

\subsection{CHT-DQN Algorithm}

This stochastic game models the Security Operations Center (SOC) defense against cloud-based cyber threats, where the SOC analyst and attacker interact dynamically in a sequential decision-making environment. Each agent maximizes its utility under the assumption that the other agent is doing the same. We follow the approach proposed in \cite{kleiman2016coordinate}, which extends Cognitive Hierarchy Theory (CHT) from single-action games \cite{camerer2004cognitive} to sequential decision processes. Level-$K$ agents respond optimally to level-$(K-1)$ agents, establishing a structured reasoning hierarchy. This enables SOC analysts to anticipate adversarial strategies in real-time security operations.

Our approach integrates CHT into DQN, enabling SOC analysts to make real-time adaptive decisions against evolving cyber threats. By modeling a hierarchical cognitive structure, the SOC analyst (level-$1$) anticipates the attacker's (level-$0$) behavior. This structured decision-making process enhances SOC efficiency in alert prioritization, incident response, and cyber defense. Figure~\ref{fig:model} illustrates our proposed learning framework.

\paragraph{CHT-DQN Formulation.}
In our formulation, we extend CHT principles as defined in Equation~\eqref{eq:expected_payoff}, following \cite{kleiman2016coordinate}. Given that the SOC analyst operates at level-1 reasoning, it assumes the attacker follows a level-0 policy, which lacks strategic adaptation. The SOC analyst’s real-time defensive actions are optimized based on attack graph insights and learned attack patterns.

The expected payoff function \( E_k(\pi_i(s_i^j)) \) corresponds to the SOC analyst’s Q-value function, given by:
\begin{equation}
    Q^1_{\mathcal{D}}(s, a_{\mathcal{D}}; \boldsymbol{\theta}_{\mathcal{D}}) = \sum_{s'} \mathbb{P}(s'|s,a_\mathcal{D}) \big[u_{\mathcal{D}}(s, a_{\mathcal{D}}, s')
     + \gamma \max_{a'_{\mathcal{D}}} Q^1_{\mathcal{D}}(s', a'_{\mathcal{D}}; \boldsymbol{\theta}_{\mathcal{D}}) \big],
\end{equation}
where \(\mathbb{P}(s' | s, a_{\mathcal{D}})\) incorporates the attacker’s policy, modeling the transition probability structure from the attack graph.

\paragraph{Q-value Function of Attacker.} The level-0 agent, which is the cloud attacker,  does not use strategic thinking and attempts to reach its goal without considering how the other player might affect its action.
For the attacker at level-0:
\begin{equation}
\label{eq:Q_A}
    Q^0_{\mathcal{A}}(s, a_{\mathcal{A}}; \theta_{\mathcal{A}}) = \sum_{s'} p(s' | s, a_{\mathcal{A}}) \big[u_{\mathcal{A}}(s, a_{\mathcal{A}}, s')
    + \gamma \max_{a'_{\mathcal{A}}} Q^0_{\mathcal{A}}(s', a'_{\mathcal{A}}; \theta_{\mathcal{A}}) \big],
\end{equation}
where \(Q^0_{\mathcal{A}}\) represents the Q-value for the attacker at level-0, \(p(s'|s,a_\mathcal{A})\) denotes the transition probabilities of the attacker, \(u_{\mathcal{A}}\) is the utility (reward) of the attacker, and \(\gamma\) is the discount factor.

\paragraph{Attacker's Policy.}
The attacker, modeled as a level-0 agent, follows an exploitation-based policy based on Q-values:
\begin{equation}
    \pi^0_{\mathcal{A}}(a_{\mathcal{A}} | s) \propto \exp(\beta Q^0_{\mathcal{A}}(s, a_{\mathcal{A}}; \boldsymbol{\theta}_{\mathcal{A}})),
\end{equation}
where \(\beta\) is an inverse temperature parameter that controls stochasticity in decision-making.

\paragraph{SOC Analyst’s Policy.}
At level-1, the SOC analyst’s policy adapts based on the attacker's expected actions, following:
\begin{equation}
    \pi^1_{\mathcal{D}}(a_{\mathcal{D}} | s) \propto \exp(\beta Q^1_{\mathcal{D}}(s, a_{\mathcal{D}}; \boldsymbol{\theta}_{\mathcal{D}})).
\end{equation}

\paragraph{CHT-Driven Transition Probabilities.}
At level-1, the SOC analyst incorporates the attacker’s level-0 policy into the transition probability:
\begin{equation}
    \mathbb{P}(s' | s, a_{\mathcal{D}}) = \sum_{a_{\mathcal{A}}} \mathbb{P}(s' | s, a_{\mathcal{D}}, a_{\mathcal{A}}) \pi^0_{\mathcal{A}}(a_{\mathcal{A}} | s),
    \label{eq:trans_prob}
\end{equation}
where \(\mathbb{P}(s' | s, a_{\mathcal{D}}, a_{\mathcal{A}})\) is obtained from the attack graph \(\mathcal{G}\). This formulation allows SOC analysts to dynamically refine their defenses based on real-time adversarial behavior.

\paragraph{Training the CHT-DQN.}
The Q-functions \( Q^0_{\mathcal{A}}(s, a_{\mathcal{A}}; \boldsymbol{\theta}_{\mathcal{A}}) \) and \( Q^1_{\mathcal{D}}(s, a_{\mathcal{D}}; \boldsymbol{\theta}_{\mathcal{D}}) \) are parameterized by \(\boldsymbol{\theta}_{\mathcal{A}}\) and \(\boldsymbol{\theta}_{\mathcal{D}}\), respectively. These parameters are learned using DQN training, where the loss function is:
\begin{equation}
   L_{\mathcal{A}}(\boldsymbol{\theta}_{\mathcal{A}}) = \mathbb{E}_{s, a_{\mathcal{A}} \sim \rho(\cdot)} \left[ \left( y_{\mathcal{A}} - Q^0_{\mathcal{A}}(s, a_{\mathcal{A}}; \boldsymbol{\theta}_{\mathcal{A}}) \right)^2 \right],
\end{equation}
where \( y_{\mathcal{A}} = u_{\mathcal{A}} + \gamma \max_{a'} Q^0_{\mathcal{A}}(s', a'; \boldsymbol{\theta}_{\mathcal{A}}^-) \) is the target Q-value, and \(\boldsymbol{\theta}_{\mathcal{A}}^-\) represents the target network’s parameters.

Similarly, the SOC analyst’s gradient update follows:
\begin{equation}
   \nabla_{\boldsymbol{\theta}_{\mathcal{A}}} L_{\mathcal{A}}(\boldsymbol{\theta}_{\mathcal{A}}) = \mathbb{E}_{s, a_{\mathcal{A}} \sim \rho(\cdot)} \left[ \left( y_{\mathcal{A}} - Q^0_{\mathcal{A}}(s, a_{\mathcal{A}}; \boldsymbol{\theta}_{\mathcal{A}}) \right) \nabla_{\boldsymbol{\theta}_{\mathcal{A}}} Q^0_{\mathcal{A}}(s, a_{\mathcal{A}}; \boldsymbol{\theta}_{\mathcal{A}}) \right].
\end{equation}
\begin{theorem}
As the number of attack graph nodes \( N \to \infty \), the Q-value function for the SOC analyst under CHT-DQN is lower bounded by the Q-value function under DQN, assuming a stationary and known attack strategy.
\end{theorem}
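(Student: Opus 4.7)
The plan is to establish the dominance via the Bellman equation and the monotonicity of the Bellman operator, combining induction on the attack-graph size $N$ with a one-step-lookahead comparison of the two transition models.

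First, I would formalize what each agent believes about the transition kernel. Under CHT-DQN, the defender uses $\mathbb{P}^{\text{CHT}}(s'|s,a_{\mathcal{D}}) = \sum_{a_{\mathcal{A}}} \mathbb{P}(s'|s,a_{\mathcal{D}},a_{\mathcal{A}})\pi^0_{\mathcal{A}}(a_{\mathcal{A}}|s)$ from equation~(\ref{eq:trans_prob}), which is faithful to the assumed stationary level-0 attacker. Under standard DQN, the defender has no structured opponent model and treats attacker actions as unmodeled environmental nuisance, effectively marginalizing with a mis-specified (uninformed, e.g.\ uniform) distribution $\tilde{\pi}_{\mathcal{A}}$. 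Let $\mathcal{T}_{\text{CHT}}$ and $\mathcal{T}_{\text{DQN}}$ denote the Bellman operators induced by these kernels paired with the reward $u_{\mathcal{D}}$. Both are $\gamma$-contractions in the sup-norm and admit unique fixed points $Q^{\text{CHT}}_{\mathcal{D}}$ and $Q^{\text{DQN}}_{\mathcal{D}}$.

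Second, I would prove a pointwise one-step dominance lemma: for every state--action pair and every bounded $Q$, $\mathcal{T}_{\text{CHT}} Q(s, a_{\mathcal{D}}) \geq \mathcal{T}_{\text{DQN}} Q(s, a_{\mathcal{D}})$. This follows because CHT evaluates the expectation under the \emph{true} level-0 attacker distribution, while DQN evaluates under a mis-specified one; the defender's greedy continuation $\max_{a'_{\mathcal{D}}}Q(s',a'_{\mathcal{D}})$ is best exploited when the expectation is taken under the correct kernel. Combined with the fact that CHT's greedy action is an arg-max against $\pi^0_{\mathcal{A}}$ while DQN's greedy action is an arg-max against $\tilde{\pi}_{\mathcal{A}}$, the required operator-level inequality follows. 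Induction on $N$ is then carried out as follows: the base case $N=1$ collapses both operators to the same expectation, so equality holds; for the inductive step, enlarging the AG by one node extends both the binary state vector and the attacker action set, and CHT correctly reweights the new branches by $\pi^0_{\mathcal{A}}$, so monotonicity ($Q_1 \geq Q_2 \Rightarrow \mathcal{T}_{\text{CHT}} Q_1 \geq \mathcal{T}_{\text{CHT}} Q_2 \geq \mathcal{T}_{\text{DQN}} Q_2$) preserves dominance. Passing to the fixed point of each operator and letting $N \to \infty$ yields the theorem.

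The hard part, which I expect to be the main obstacle, is making the asymptotic statement quantitative rather than merely monotone. The inductive bookkeeping gives a per-$N$ inequality, but to argue that the gap is strictly positive and grows with $N$ (as the simulation section suggests), I would need to control how DQN's mis-specification compounds through the $\max$-operator as the joint action space expands. I plan to bound the one-step gap $(\mathcal{T}_{\text{CHT}} - \mathcal{T}_{\text{DQN}})Q$ by the total variation between $\pi^0_{\mathcal{A}}$ and $\tilde{\pi}_{\mathcal{A}}$, then amplify this bound through the $(1-\gamma)^{-1}$ horizon factor via the standard simulation-lemma argument. Ensuring the bound remains non-vanishing as $N \to \infty$ will require a mild concentration assumption on the attacker's level-0 policy, namely that $\pi^0_{\mathcal{A}}$ puts non-uniform mass on high-value vulnerabilities--an assumption consistent with the exploit-driven softmax policy already imposed on the attacker in Equation~(\ref{eq:Q_A}).
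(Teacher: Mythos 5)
Your proposal has a genuine gap at its load-bearing step: the pointwise one-step dominance lemma $\mathcal{T}_{\text{CHT}}Q(s,a_{\mathcal{D}})\geq\mathcal{T}_{\text{DQN}}Q(s,a_{\mathcal{D}})$ for every bounded $Q$ does not follow from the CHT kernel being ``correct'' and the DQN kernel being ``mis-specified.'' Correctness of a transition model does not make expectations larger; it makes them \emph{accurate}. In fact the inequality is likely to fail in exactly the direction you need: the true level-0 attacker is a softmax over its own Q-values, so it concentrates mass on the nodes most damaging to the defender, whereas an uninformed marginalization $\tilde{\pi}_{\mathcal{A}}$ spreads mass over harmless actions. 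For a \emph{fixed} defender action, the expectation of $u_{\mathcal{D}}+\gamma\max_{a'_{\mathcal{D}}}Q$ under the true (adversarially concentrated) kernel will therefore tend to be \emph{smaller} than under the uniform one — the mis-specified model is optimistic, not pessimistic. What actually makes CHT-DQN better is not that its backup values are larger for each $(s,a_{\mathcal{D}})$, but that the $\arg\max$ over $a_{\mathcal{D}}$ taken against the correct kernel produces a policy whose \emph{true} value (evaluated under the true dynamics) dominates the true value of the policy that is greedy with respect to the wrong kernel. You are conflating two different objects: the Q-function each algorithm \emph{computes} (a fixed point of its own belief operator) and the Q-function each resulting policy \emph{achieves} in the real environment. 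The theorem is only meaningful, and only provable, for the latter comparison — and for that you do not need your operator inequality at all: under a stationary, known attack strategy the marginalized MDP is well defined, CHT-DQN solves it exactly, and optimality of $Q^*$ over the true value of any other policy (including the DQN-greedy one, via the performance-difference or simulation-lemma argument you mention in your last paragraph) gives the bound directly, with no induction on $N$ required.

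For what it is worth, your route is genuinely different from the paper's: the paper inducts on $N$ at the level of per-step expected rewards, asserting at the inductive step that adding a node preserves the inequality because CHT ``dynamically adjusts'' its transition probabilities, and then lifts the reward inequality to $V$ and $Q$. That step is asserted rather than derived, so the paper's argument shares the same underlying weakness as yours — neither actually establishes why modeling the attacker yields a larger backup — but your Bellman-operator framing at least makes the missing lemma explicit and falsifiable. If you repair the argument by comparing true policy values rather than belief-operator fixed points, your contraction-and-monotonicity skeleton becomes the standard and correct way to finish; the quantitative total-variation amplification in your final paragraph is then a reasonable (if unnecessary for the stated theorem) extension.
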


\begin{proof}
(See Appendix for detailed derivation.)
\end{proof}

This theorem demonstrates that as the attack graph complexity increases, CHT-DQN maintains a performance advantage by anticipating attacker strategies, making it a scalable and effective cybersecurity framework for cloud-based SOC operations.
\begin{figure}[t]
    \centering
    \begin{subfigure}{0.9\textwidth}
        \centering
        \includegraphics[width=\textwidth]{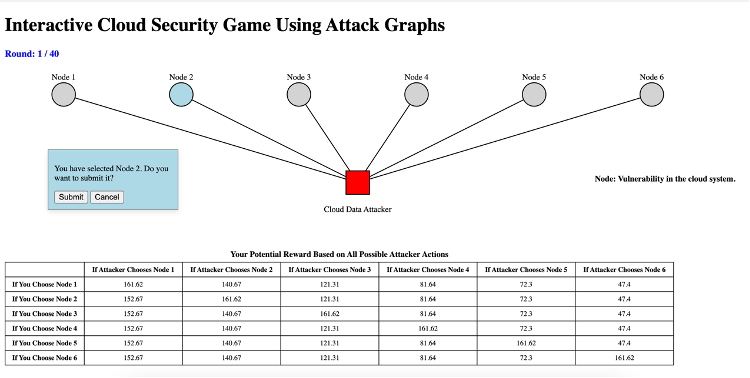}
        \caption{\textbf{Reward-Aware SOC Analyst $\mid$ DQN Attacker.} Participants defend cloud vulnerabilities based on immediate reward feedback without transition probabilities.}
        \label{fig:web-dqn}
    \end{subfigure}
    
    \vspace{5mm} 

    \begin{subfigure}{0.9\textwidth}
        \centering
        \includegraphics[width=\textwidth]{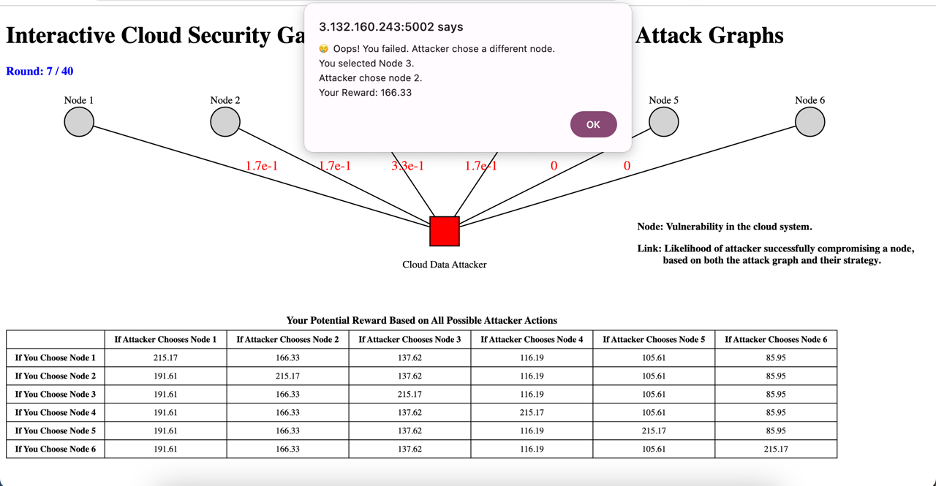}
        \caption{\textbf{Reward + Transition-Aware SOC Analyst $\mid$ DQN Attacker.} Participants receive both reward feedback and estimated attack transition probabilities, aligning with CHT-DQN.}
        \label{fig:web-cht}
    \end{subfigure}
    
    \caption{Human-Interactive Web-Based DRL Games on Amazon MTurk. The SOC analyst (defender) interacts with a DQN-based attacker under two different strategic information settings.}
    \label{fig:web-game}
\end{figure}

\section{Human-Interactive Web-Based DRL Game Scenarios}

\paragraph{Reward-Aware SOC Analyst $\mid$ DQN Attacker}
Participants act as SOC analysts, making real-time defensive decisions in a cloud security environment against a DQN-based attacker. They receive only reward feedback after selecting a defensive action, based on their chosen node and the attacker's selected target. This setup simulates SOC analysts refining their defensive strategies through experience (Figure~\ref{fig:web-dqn}).

\paragraph{Reward + Transition-Aware SOC Analyst $\mid$ DQN Attacker}
This setting provides both reward feedback and estimated attack transition probabilities, computed from Equation~\ref{eq:trans_prob}. This additional information enables SOC analysts to predict the attacker's behavior, improving cyber defense strategies through AI-assisted SOC decision-making (Figure~\ref{fig:web-cht}).

\paragraph{Participants and Task Procedure on Amazon MTurk}
This study received Institutional Review Board (IRB) approval, ensuring ethical compliance and minimal risk. All participants provided informed consent, and data collection adhered to cybersecurity research ethics. 

A total of 83 participants were recruited, with 80 completing all 40 rounds (40 per game variant). Three incomplete sessions were excluded from the analysis.

\paragraph{Qualification Requirements}
Participants met the following eligibility criteria:
\begin{itemize}
    \item HIT Approval Rate above 80\%.
    \item Country of residence: Australia, India, United Kingdom, or United States.
    \item Employment in Software \& IT Services to ensure relevant cybersecurity knowledge.
\end{itemize}

\paragraph{Experimental Phases}
The experiment was conducted in two phases:
\begin{itemize}
    \item \textbf{Phase 1:} Initial pilot with 10 participants per game for preliminary evaluation.
    \item \textbf{Phase 2:} Full-scale study with 30 participants per game. 
\end{itemize}
The “Reward + Transition-Aware” variant ran from 9:18 am to 10:13 am PDT, while the “Reward-Aware” variant ran from 9:18 am to 10:05 am PDT.

Participants were given instructions, played 40 rounds, and received a final score and confirmation code for compensation.

\paragraph{Data Collection and Metrics}
We recorded:
\begin{itemize}
    \item \textit{Participant ID:} Anonymized unique identifier.
    \item \textit{State and Action Histories:} SOC analyst and attacker decisions per round.
    \item \textit{Reward Histories:} Cumulative SOC analyst and attacker rewards.
    \item \textit{Attacker Policy and Exploitation Probabilities:} Estimations of level-0 attack behavior.
    \item \textit{Data Protection and Exploitation Costs:} Defensive resource allocations and attacker exploitation metrics.
    \item \textit{Performance Metrics:} Weighted data protection ratios, response times, and final scores.
\end{itemize}

This dataset enabled an in-depth analysis of real-time decision-making in both game settings, demonstrating the effectiveness of CHT-DQN for SOC-driven cloud security.

\begin{figure*}[t]
    \centering
    \subfloat[Convergence Analysis of SOC Analyst (Defender) over Time. The magenta line (CASE 1) shows the highest performance with CHT-DQN against a random attacker, followed by CHT-DQN against DQN (CASE 2). The lime and dark green lines (CASES 3 and 4) indicate lower performance for standard DQN SOC analysts.]
      {\includegraphics[width=0.45\textwidth]{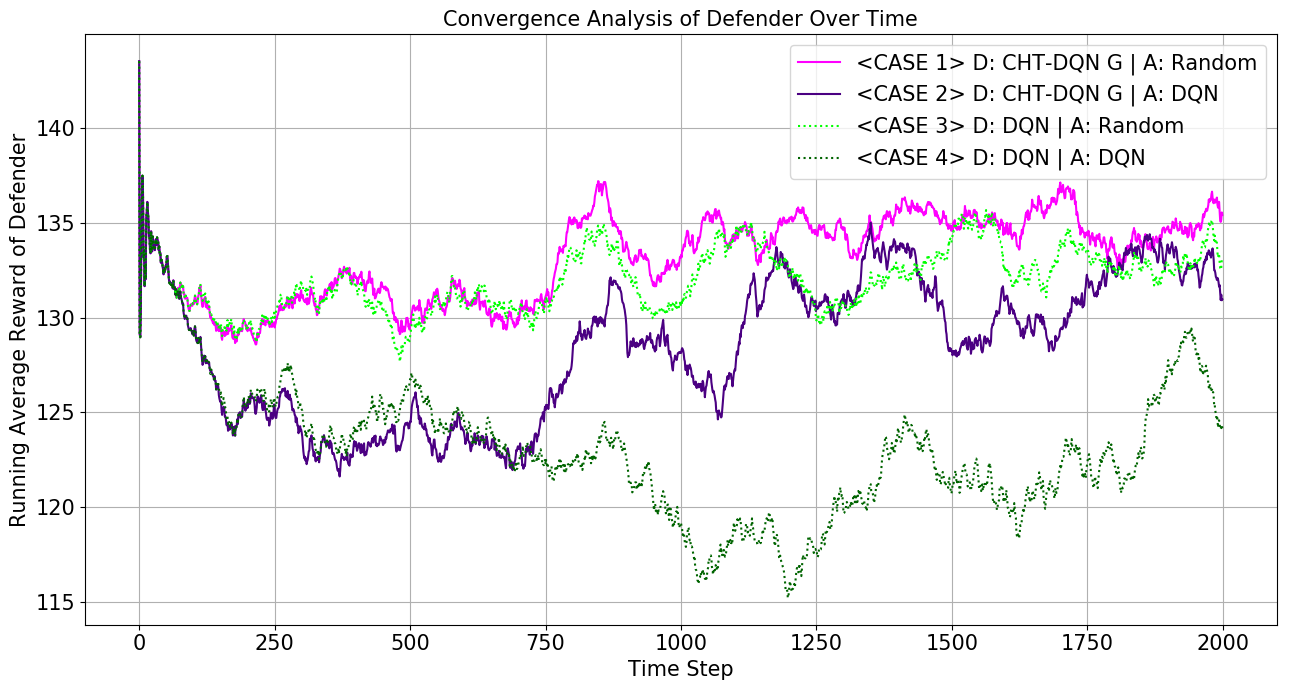}
        \label{fig:defender-convergence}}
    \hfill
    \subfloat[Convergence Analysis of Attacker over Time. The magenta line (CASE 1) shows the lowest attacker rewards against the CHT-DQN SOC analyst, while the indigo line (CASE 2) shows higher rewards against CHT-DQN. The lime and dark green lines (CASES 3 and 4) show the highest rewards against DQN SOC analysts.]{
        \includegraphics[width=0.45\textwidth]{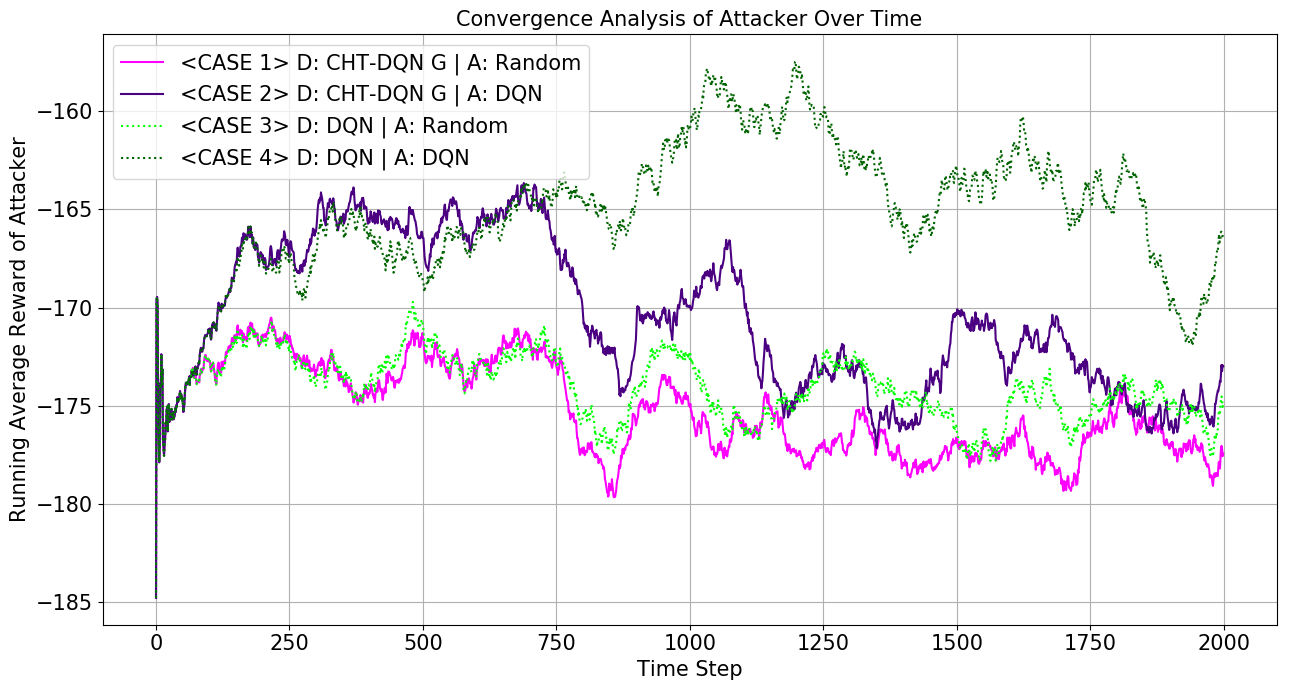}
        \label{fig:attacker-convergence}}
    \caption{Convergence Analysis: Running Average Rewards Over Time for SOC Analyst (Defender) and Attacker in Different Scenarios with a 6-Node AG. Timesteps 0 to 1000 represent training with epsilon-greedy, and timesteps 1000 to 2000 represent evaluation with pure exploitation. The results were averaged over 10 random seeds.}
\end{figure*}

\section{Experiments}

We evaluate the efficacy of Human-AI collaboration in SOCs using our proposed Cognitive Hierarchy Theory-driven Deep Q-Learning (CHT-DQN). Our study consists of two experimental settings:

\begin{itemize}
    \item \textbf{Deep Reinforcement Learning (DRL) Simulations:} Fully automated simulations where CHT-DQN-enhanced SOC analysts defend against AI-driven cyber threats in controlled environments.
    \item \textbf{Human-Interactive Web-Based DRL Experiments:} Large-scale human-in-the-loop experiments on Amazon MTurk, where participants act as SOC analysts defending cloud systems against an AI-driven attacker.
\end{itemize}

\subsection{Experimental Setup}

We implemented the SOC simulation and reinforcement learning framework in Python 3 using PyTorch. The framework integrates CHT-based decision-making to improve AI-augmented SOC defense strategies.

\paragraph{Neural Network Architecture.} 
The SOC analyst and attacker models use three hidden layers with 64, 128, and 256 neurons, respectively, and ReLU activations. The output layer maps cloud security states to optimal defense actions.

\paragraph{Training Parameters.}
The models were trained using the following hyperparameters:

\begin{itemize}
    \item \textbf{Batch size:} 64
    \item \textbf{Experience buffer size:} $1 \times 10^6$
    \item \textbf{Discount factor:} $\gamma = 0.98$ (prioritizing long-term security posturing)
    \item \textbf{Cognitive hierarchy depth:} $\beta = 1$
    \item \textbf{Learning rate:} $5 \times 10^{-2}$
\end{itemize}

The data size on attack graph nodes $\mathcal{G}$ ranged from 1 to 10, with a noise factor $\epsilon$ drawn from $[0,1]$. The defense and attack costs at node $i$ were defined as:
\begin{equation}
    b_i \pm \epsilon, \quad \hat{b}_i \pm \epsilon.
\end{equation}

We set $\alpha_{\mathcal{D},i} = 10$, $\alpha_{\mathcal{A},i} = 10$, $\beta_{\mathcal{D},i} = 1$, and $\beta_{\mathcal{A},i} = 1$, ensuring adaptive attack-defense interactions.

\begin{figure*}[t]
    \centering     
    \subfloat[Average Weighted Data Protection Ratio across different attack graph sizes. This figure shows how data protection improves as the number of nodes in the attack graph increases. CHT-DQN (Case 1: magenta line, Case 2: indigo line) consistently achieves higher data protection than standard DQN (Case 3: lime dotted line, Case 4: dark green dotted line). Error bars represent the standard deviation over 10 random seeds.]{
        \includegraphics[width=0.45\textwidth]{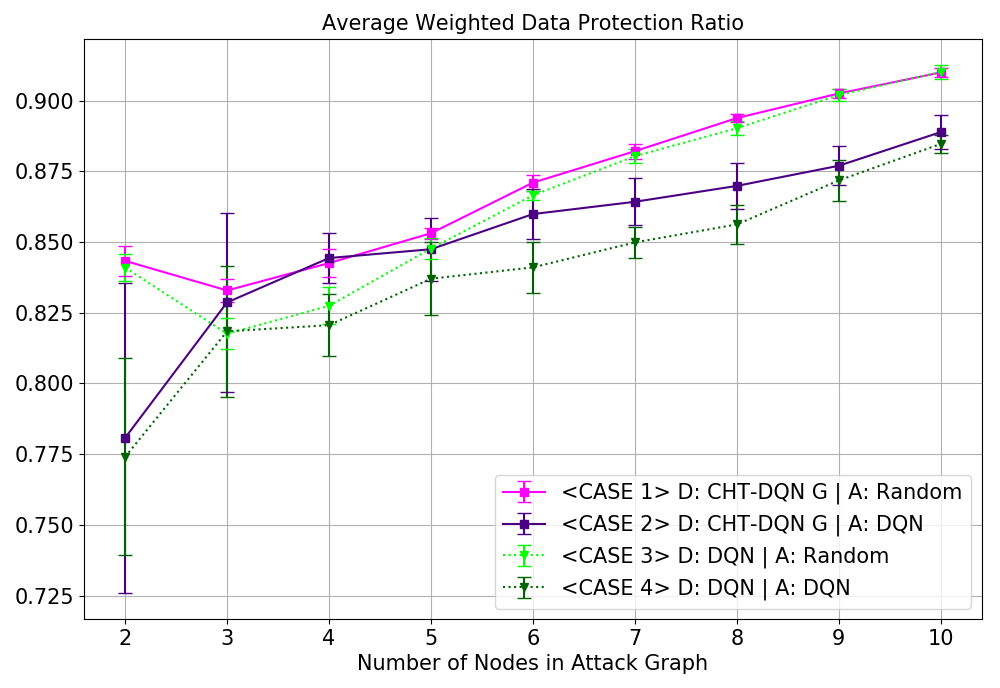}
    \label{fig:protect}}
    \hfill
    \subfloat[Normalized Discrepancy in Action Frequencies between SOC analyst and attacker as attack graph complexity increases. The discrepancy decreases with more complex attack graphs. Case 1 (magenta) shows the highest discrepancy, and Case 4 (dark green) the least for smaller AGs. For larger AGs, Case 3 (lime) has the least discrepancy, while Case 2 (indigo) shows the highest. Error bars represent standard deviation over 10 random seeds.]
        {\includegraphics[width=0.45\textwidth]{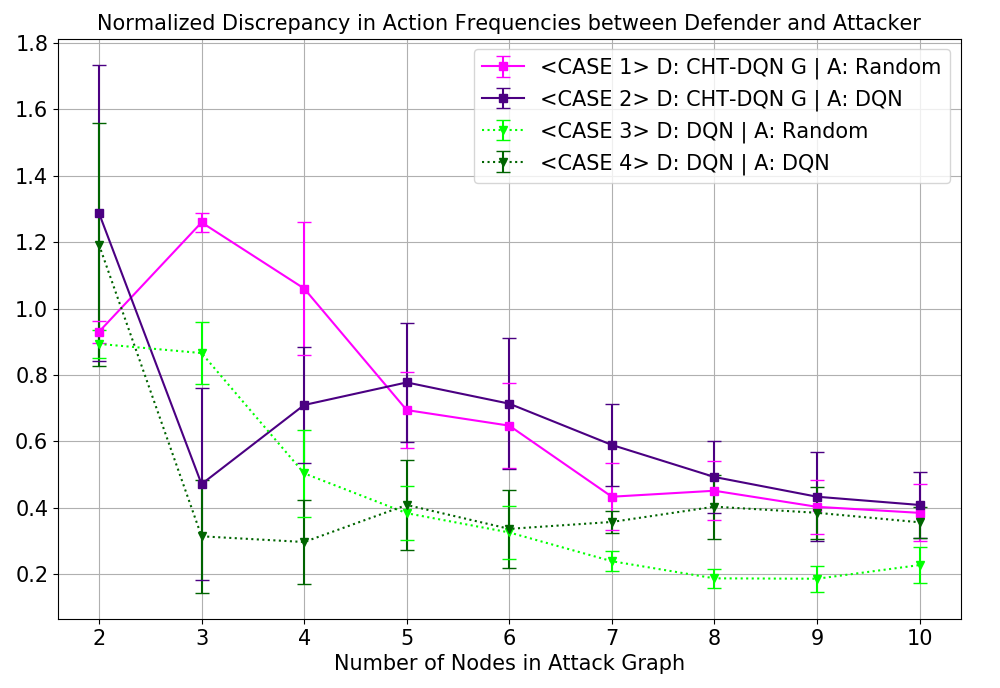}
        \label{fig:actions}}

    \caption{Performance Comparison of CHT-DQN and DQN SOC Analyst Strategies. The figures illustrate the advantages of using CHT-DQN over standard DQN in terms of data protection and action alignment.}\label{fig:cht-dqn}
\end{figure*}

\subsection{Deep Reinforcement Learning Scenario-Based Experiments}

We simulated interactions in a cloud-based SOC environment, where the AI-powered SOC analyst uses an attack graph (AG) to strategize against adversarial threats. The AG dynamically evolves, with nodes representing vulnerabilities and edges modeling exploit likelihoods.

\paragraph{Experimental Conditions.} 
We varied the number of attack graph nodes, $N \in [2,10]$, to analyze how defense strategies scale with complexity. Exploitation likelihoods $\mathbb{P}(v)$ in $\mathcal{G}$ were updated every 100 steps. 

To balance exploration and exploitation, we employed an $\epsilon$-greedy policy, initialized at $\epsilon=1$ and decaying to $\epsilon=0.05$ at a decay rate of 2. Each SOC simulation ran for 2000 time steps, averaged over 10 random seeds.

\paragraph{Evaluated Defense Strategies.}
We compared four SOC defense strategies:
\begin{itemize}
    \item \textbf{Case 1:} CHT-DQN SOC analyst vs. Random attacker
    \item \textbf{Case 2:} CHT-DQN SOC analyst vs. DQN attacker
    \item \textbf{Case 3:} Standard DQN SOC analyst vs. Random attacker
    \item \textbf{Case 4:} Standard DQN SOC analyst vs. DQN attacker
\end{itemize}

These settings model real-world SOC operations, enabling comparisons between adaptive SOC analysts (CHT-DQN) and traditional reinforcement learning (DQN) under varying adversarial conditions.

\paragraph{Findings.}
Figure \(\ref{fig:defender-convergence}\) shows the running average reward of the SOC analyst over time, while Figure \(\ref{fig:attacker-convergence}\) presents the running average reward of the attacker. These figures provide insights into strategy evolution and convergence during training. The running average is computed over the last 100 timesteps, with the interval from timestep 0 to 1000 representing the training phase (epsilon-greedy exploration-exploitation) and from timestep 1000 to 2000 representing the evaluation phase (exploitation-based decision-making). An AG with 6 nodes was used for these experiments.

In the SOC analyst's convergence analysis (Figure \(\ref{fig:defender-convergence}\)), Case 1 (magenta line) shows the highest rewards for the SOC analyst using CHT-DQN against a random attacker, indicating effective learning and anticipation of random attacks. Case 2 (indigo line) shows slightly lower rewards for CHT-DQN against a DQN attacker, reflecting a more complex learning process. Case 3 (lime dotted line) shows worse performance for standard DQN against a random attacker, highlighting the advantage of CHT. Case 4 (dark green dotted line) shows the lowest performance for both SOC analyst and attacker using standard DQN, suggesting they struggle without CHT. In the attacker's convergence analysis (Figure \(\ref{fig:attacker-convergence}\)), Case 1 shows lower rewards for a random attacker against a CHT-DQN SOC analyst, indicating effective defense. Case 2 shows higher rewards for a DQN attacker against CHT-DQN but still lower than against standard DQN. Case 3 shows higher rewards for a random attacker against standard DQN, indicating less effective defense. Case 4 shows the highest rewards for a DQN attacker against a standard DQN SOC analyst, demonstrating the latter's struggle. Overall, these results confirm that CHT-DQN improves real-time SOC decision-making, leading to more effective human-AI collaboration in mitigating cyber threats.

In Figure \(\ref{fig:protect}\), the average weighted data protection ratio across the cloud system is shown, formulated as \(\sum_i (\delta^t_i \cdot b_i)/\sum b_i\). This metric represents the proportion of cloud data protected at each timestep, averaged over the simulation duration, highlighting how AI-driven SOC analysts safeguard cloud infrastructure against APTs. As the number of nodes in the AG increases, the trend shows improved data protection, indicating that higher complexity in the cloud system allows the algorithms to better safeguard data by reducing the impact of any single node being exploited.
Remarkably, scenarios, where the SOC analyst utilized CHT-DQN (Case 1: magenta line, and Case 2: indigo line), outperformed those using DQN (Case 3: lime dotted line, and Case 4: dark green dotted line). This confirms that CHT-DQN model enhances SOC decision-making. The highest data protection is achieved when the attacker takes random actions against a SOC analyst armed with CHT-DQN (Case 1), underscoring the value of cognitive hierarchy in defense strategies. Conversely, the lowest data protection occurs when both the SOC analyst and attacker employ DQN at the same cognitive level (Case 4). By incorporating CHT, the SOC analyst's model of the attacker is enhanced by adjusting the attacker's Q-values based on the frequency of previously successful defensive actions, refining the SOC analyst's policy by best-responding to a more accurate model of the attacker.

Figure \(\ref{fig:actions}\) illustrates the normalized discrepancy in action frequencies between the SOC analyst and attacker as the number of nodes in \(\mathcal{G}\) varies. Generally, the discrepancy decreases as \(\mathcal{G}\) becomes more complex, suggesting that with more nodes, the SOC analyst's actions align more closely with the attacker's. For smaller AGs, Case 4 shows the least frequency of action discrepancy, indicating predictable interactions when both parties use similar strategic algorithms. However, Case 1 exhibits the greatest discrepancy, indicating that a sophisticated defense strategy like CHT-DQN may not align well with an unpredictable, random attacker. As AG complexity grows, Case 3 shows the least discrepancy, implying that a standard defense approach may align better against random attacks in more complex environments. Case 2 shows the highest discrepancy, suggesting that CHT-DQN's complexity could be less effective in anticipating a DQN-based attack in a more complex environment.

In both figures, error bars represent the standard deviation over 10 random seeds, providing insight into result variability and confidence. In Figure \(\ref{fig:protect}\), the increasing trend with a relatively lower standard deviation as the number of nodes increases indicates stable and consistent performance improvements with more complex attack graphs. Conversely, in Figure \(\ref{fig:actions}\), higher standard deviation in scenarios with fewer nodes suggests more variability in performance, which stabilizes as the number of nodes increases, indicating the convergence in strategic actions between the SOC analyst and attacker in more complex environments. This pattern underscores the necessity of considering both mean performance and variability to fully understand the efficacy and reliability of defensive strategies in various scenarios.
\begin{figure}[th]
\centering
\includegraphics[width=0.6\textwidth]{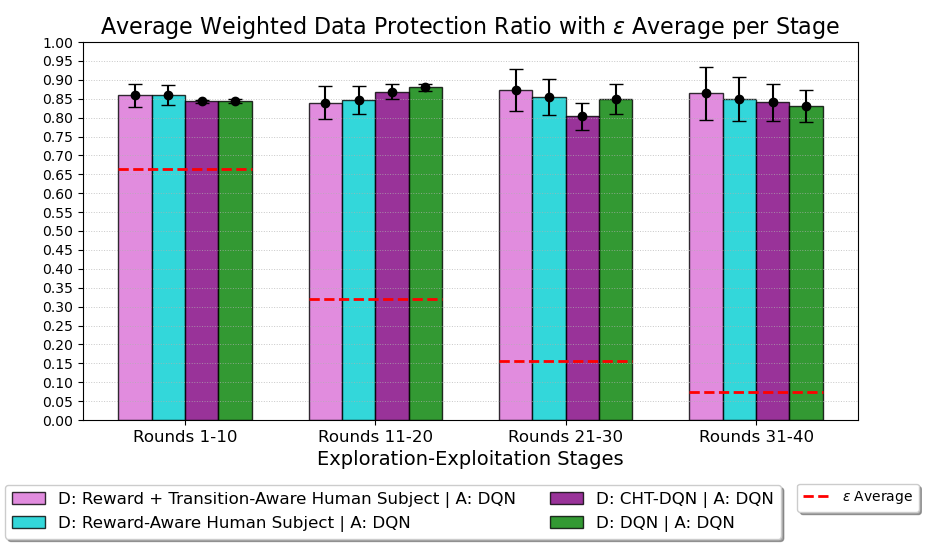}
\caption{Average Weighted Data Protection Ratio with $\epsilon$ Average per Stage across 40 Rounds. This figure compares four defense scenarios, highlighting the advantages of human-AI collaboration scenarios 1 (orchid) and 2 (dark turquoise) over fully automated AI-driven decision-making scenarios 3 (purple) and 4 (green). The red dashed line indicates the average $\epsilon$ values across four exploration-exploitation stages, highlighting the shift from exploration to exploitation over time.}
\label{fig:mturk-protect}
\end{figure}
\begin{figure}[tb]
\centering
\includegraphics[width=0.8\textwidth]{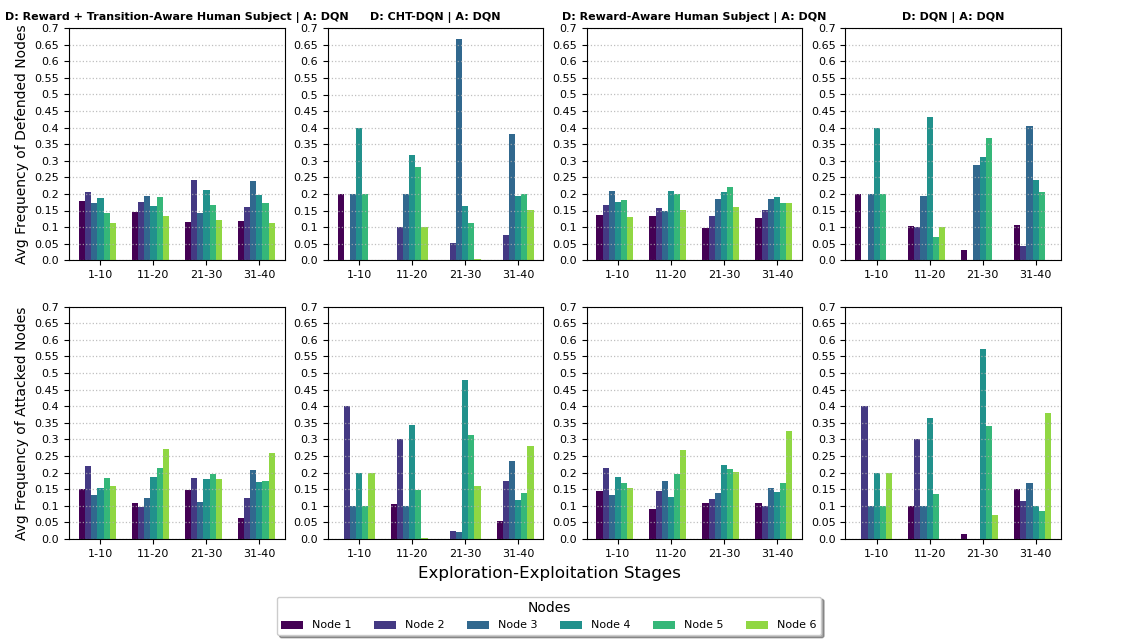}
\caption{Average Frequency of Defended and Attacked Nodes across Different Exploration-Exploitation Stages. The scenarios are as follows: "D: Reward + Transition-Aware Human Subject $\big|$ A: DQN," "D: Reward-Aware Human Subject $\big|$ A: DQN," "D: CHT-DQN $\big|$ A: DQN," and "D: DQN $\big|$ A: DQN." The top row shows the average frequency of defended nodes, while the bottom row shows the frequency of attacked nodes, with colors indicating specific nodes in the attack graph. Human-in-the-loop SOC analysts (Scenarios 1 and 2) demonstrate more dynamic adaptation compared to AI-driven models (Scenarios 3 and 4).}
\label{fig:mturk-actions}
\end{figure}

\subsection{Human-Interactive Web-Based DRL Experiments on Amazon MTurk}

To validate our human-AI collaboration framework in SOCs, we conducted a large-scale human-in-the-loop experiment on Amazon Mechanical Turk (MTurk). The two interactive web-based games were implemented on an AWS t3.medium EC2 instance, which provided the necessary scalability, low latency, and reliable performance for real-time human-AI interactions. Participants accessed the games via the links \url{http://3.132.160.243:5001/} and \url{http://3.132.160.243:5002/} on MTurk. In this human-AI SOC simulation, participants played the role of SOC analysts in a dynamic cloud defense scenario designed to counter AI-driven attackers. The goal was to examine how human SOC analysts behave when confronted with DQN attackers, using two different scenarios modeled by DQN and CHT-DQN algorithms.

Participants were tasked with defending cloud nodes against APTs over 40 rounds per session. The security environment was modeled using a dynamically constructed 6-node AG, representing cloud infrastructure vulnerabilities. The attacker's strategy was AI-driven, adapting its tactics using a DQN model, simulating a realistic adaptive adversary in SOC operations. 

\paragraph{Evaluated Scenarios.}
We evaluated four SOC analyst conditions:

\begin{itemize}
    \item \textbf{Scenario 1:} Reward + Transition-Aware Human SOC Analyst vs. DQN attacker
    \item \textbf{Scenario 2:} Reward-Aware Human SOC Analyst vs. DQN attacker
    \item \textbf{Scenario 3:} Fully automated CHT-DQN SOC analyst vs. DQN attacker
    \item \textbf{Scenario 4:} Fully automated DQN SOC analyst vs. DQN attacker
\end{itemize}
In Scenarios 1 and 2, human SOC analysts engaged in real-time cybersecurity decision-making, facing automated DQN attackers across 40 rounds. These scenarios evaluated how human-AI collaborative decision-making compares to fully automated SOC analysts.
For Scenarios 3 and 4, results were averaged over 40 random seeds to ensure statistical robustness in AI-driven SOC performance evaluations.

In all four scenarios, the automated SOC analyst or attacker ran their algorithm with $\epsilon$-greedy policy with $\epsilon$ starting at $0.9$ to $0.05$, indicating starting more exploration and ending in more exploitation.
The web interface allowed us to track SOC analyst actions, failure events, and subsequent responses. The study yielded several key insights, illustrated through four figures derived from both human SOC analysts and AI-driven simulation data. 

\paragraph{Findings.}
Figure \ref{fig:mturk-protect} presents the average weighted data protection ratio over four exploration-exploitation stages, comparing four distinct scenarios.  The attack graph employed consists of $6$ nodes, with the $\epsilon$-greedy parameter of the automated agents, shown as the red dashed line, indicating the transition from exploration to exploitation over time, especially in the later stages where exploitation becomes predominant.

In the Reward-Aware Human SOC analyst scenario, participants selected nodes to defend each round based on observing potential rewards without access to transition probabilities. In contrast, the Reward + Transition-Aware Human SOC analyst scenario provided participants with both potential rewards and transition probabilities derived from the CHT-DQN model. This additional information allowed SOC analysts at cognitive level-$1$ to anticipate attacker strategies at level-$0$ more effectively, particularly in later stages where exploitation is emphasized.

The results reveal a marked performance advantage in "D: Reward + Transition-Aware Human Subject $\big|$ A: DQN" (Scenario 1) over "D: CHT-DQN $\big|$ A: DQN" (Scenario 3) in the exploitation stage. This outcome suggests that access to CHT-DQN-driven transition probabilities—incorporating the attacker’s level-$0$ policy and attack graph likelihoods—enhances the SOC analyst’s ability to safeguard cloud data. The "D: Reward-Aware Human Subject $\big|$ A: DQN" (Scenario 2), similar to the fully simulated "D: DQN $\big|$ A: DQN" case (Scenario 4), shows a relatively lower average weighted data protection, indicating that potential rewards alone are insufficient for optimal defense against a strategic attacker model. This underscores the potential for interactive human-AI systems to apply CHT, effectively bridging human decision-making and advanced automated cloud defense strategies.

Figure \ref{fig:mturk-actions} compares the frequency of defended and attacked nodes across four scenarios over 40 rounds, showing how SOC analysts with different levels of strategic information adapt to protect a 6-node cloud attack graph. The top row displays "Average Frequency of Defended Nodes" and the bottom row "Average Frequency of Attacked Nodes" across four stages, each reflecting a transition from exploration to exploitation.

The "D: Reward + Transition-Aware Human Subject $\big|$ A: DQN" (Scenario 1) demonstrates the highest alignment with attacker actions, as the SOC analyst, informed by both potential rewards and transition probabilities from CHT-DQN, adapts strategically to anticipated attacks. This suggests a significant advantage in combining human intuition with CHT-DQN-driven insights for effective defense. In contrast, the CHT-DQN strategic advantage of Scenario 3, although automated, shows moderately effective alignment, as it incorporates probabilistic strategies but lacks the nuanced adaptability of human intuition.

The "D: Reward-Aware Human Subject $\big|$ A: DQN" (Scenario 2), where the SOC analyst sees only potential rewards, exhibits moderate alignment, indicating that while reward information alone helps, it lacks the strategic depth provided by transition probabilities. Scenario 4, which is fully automated and based solely on immediate rewards, displays the least alignment with attacker actions, highlighting the limitations of a purely reactive defense strategy.

Additionally, the figure shows that SOC analysts, in both scenarios 1 and 2, exhibit a more exploratory behavior than simulated SOC analysts, as indicated by the more balanced action frequencies across nodes. In contrast, some actions in the automated scenarios 3 and 4 have lower frequencies, suggesting a more rigid, exploitative approach. This pattern underscores that SOC analysts explore a broader range of defensive actions, potentially enabling more adaptive responses to dynamic attack strategies.

\begin{figure*}[t]
    \centering     
    \subfloat[Node reselection likelihood after failure for human and AI-based SOC analysts. Human participants show a lower likelihood of reselection after failure, consistent with the loss aversion and underweighting of potential future gains following a loss, as predicted by PT and CPT.]{\includegraphics[width=0.45\textwidth]{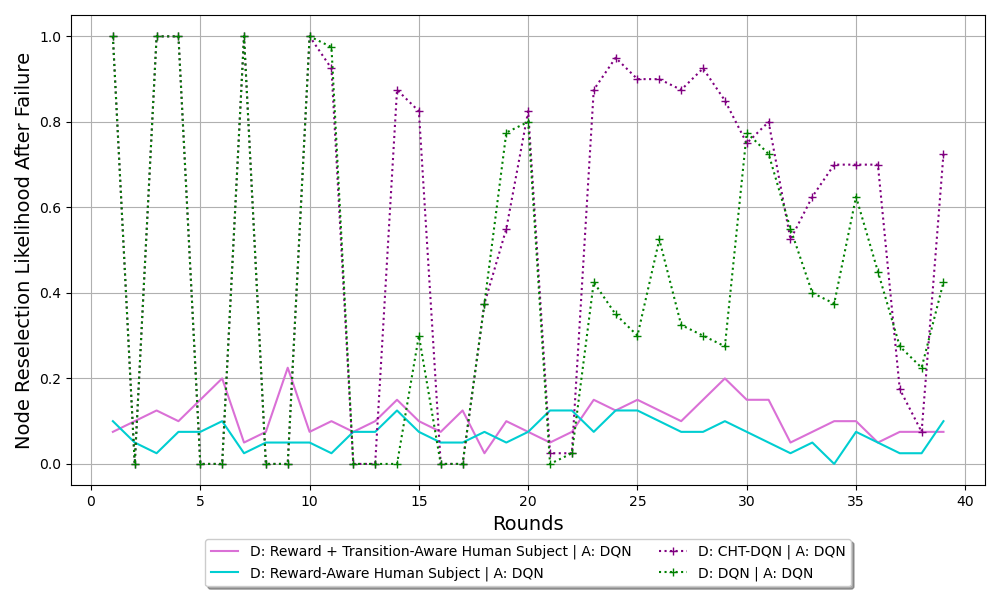}
    \label{fig:mturk-risk-aversion}}
    \hfill
    \subfloat[Node reselection likelihood after success for human and AI-based SOC analysts. Human participants show a continued tendency to reselect previously successful nodes, consistent with an overestimation of the likelihood of continued success in PT.]{\includegraphics[width=0.45\textwidth]{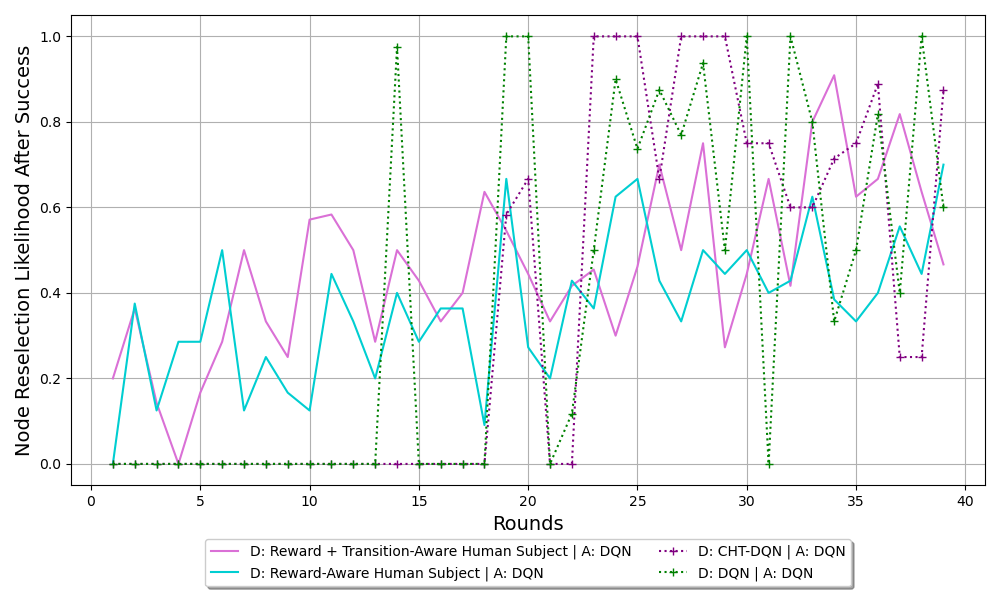}
        \label{fig:mturk-risk-seeking}}
    \caption{Node reselection likelihood by human and AI-based SOC analysts. Human behavior shows reduced reselection after failure and persistent reselection after success. This asymmetry is consistent with the framing effect and probability weighting in Prospect Theory (PT) and Cumulative Prospect Theory (CPT).}
\label{fig:mturk-pt}
\end{figure*}

Figure~\ref{fig:mturk-risk-aversion} shows that human SOC analysts in Scenarios 1 and 2 are less likely to reselect nodes that failed to protect data in the previous round, while Figure~\ref{fig:mturk-risk-seeking} reveals a continued tendency to reselect previously successful nodes. This asymmetry is well explained by Prospect Theory (PT) and Cumulative Prospect Theory (CPT) (Kahneman and Tversky \cite{kahneman1979prospect}; Tversky and Kahneman \cite{tversky1992advances}).

According to PT, the subjective value function is concave for gains and convex for losses, with losses perceived more strongly due to loss aversion. This amplifies the psychological cost of failure, discouraging immediate reselection. CPT further refines this perspective through nonlinear probability weighting, where the probability of success after a failure tends to be underweighted, whereas the likelihood of continued success is often overestimated. Together, these mechanisms explain the observed human behavior: avoidance of previously failed nodes and a persistent return to those that previously yielded success.

Table \ref{tab:elapsed-time} presents the average elapsed time for each scenario, revealing significant differences in time investment between human-interactive and fully automated scenarios. The scenarios with human SOC analysts (Scenario 1 and 2) recorded substantially higher average elapsed times, 377.93 and 459.89 seconds, respectively. This reflects the additional cognitive processing required for human participants to make strategic decisions, especially in scenarios where both reward information and transition probabilities are provided. In contrast, the fully automated scenarios (Scenario 3 and 4) completed in negligible time (0.09 and 0.08 seconds, respectively), as AI-based models make decisions instantaneously without the need for human-like deliberation. This disparity underscores the trade-off between strategic depth and response speed in human versus AI SOC analysts, with human-involved scenarios offering potentially more nuanced decision-making at the cost of increased time.
\begin{table}[t]
  \centering
  \caption{Average Elapsed Time Across Scenarios}
  \label{tab:elapsed-time}
  \begin{tabular}{lc}
    \toprule
    \textbf{Scenario} & \textbf{Elapsed Time (seconds)} \\
    \midrule
    Reward + Transition-Aware Human SOC Analyst & 377.93 \\
    Reward-Aware Human SOC Analyst & 459.89 \\
    CHT-DQN SOC Analyst & 0.09 \\
    DQN SOC Analyst & 0.08 \\
    \bottomrule
  \end{tabular}
\end{table}
The MTurk study provided crucial validation for the effectiveness of CHT-DQN models in cloud defense. By comparing human behavior against the simulated models, we observed that SOC analysts using CHT-DQN exhibited behavior closely mimicking human risk aversion and strategic adaptation patterns, confirming the model's applicability in real-world scenarios. These findings underscore the importance of incorporating cognitive models in AI-driven cybersecurity, as they better align with human decision-making processes and improve defense outcomes. These findings underscore the importance of human intuition and adaptability in strategic defense, as well as the distinct ways humans balance risk and exploitation compared to AI models.
\section{Conclusions}

This study introduces a Cognitive Hierarchy Theory-driven Deep Reinforcement Learning framework for human-AI collaboration in SOCs, demonstrating its effectiveness in mitigating AI-driven APT threats through adaptive decision-making. The integration of Attack Graphs with CHT-DQN enables SOC analysts to model adversarial behavior at multiple levels of reasoning, significantly outperforming standard DQN across both automated and human-in-the-loop defense scenarios. 
Key findings from human-AI interaction experiments on Amazon MTurk show that SOC analysts at cognitive level-1 with access to CHT-DQN-driven transition probabilities—incorporating the attacker’s level-0 policy and attack graph likelihoods—achieve higher average weighted data protection ratios and show better alignment with adaptive attackers, as evidenced by reduced action discrepancies. These experiments also align with principles from Prospect Theory and Cumulative Prospect Theory, as SOC analysts tend to avoid previously failed actions but persist with those that succeeded—highlighting cognitively grounded patterns in human-AI cybersecurity collaboration.
Simulation experiments further validate that CHT-DQN SOC analysts adapt more effectively to increasing attack graph complexity, demonstrating superior data protection capabilities against both random and strategic attackers. Increasing the AG complexity leads to higher weighted data protection ratios and lower action frequency discrepancies, reinforcing the ability of CHT-DQN to model and counter evolving adversarial strategies.

Convergence analysis indicates that CHT-DQN systematically outperforms standard DQN in mid-sized AGs, achieving higher cumulative rewards and more effective adaptation against dynamic attack strategies. Additionally, response time analysis highlights a critical trade-off between human-driven and AI-driven SOC decisions, with human analysts requiring more deliberation time but demonstrating superior strategic foresight, compared to the faster yet less adaptive AI-based models.

Furthermore, theoretical validation confirms that as AG complexity grows, the Q-values of CHT-DQN-based SOC analysts consistently surpass those of standard DQN, establishing CHT-DQN as a more effective framework for complex cybersecurity environments. These insights emphasize the critical role of cognitive modeling in deep reinforcement learning, providing a robust foundation for integrating real-time AI-driven decision support systems in SOCs.

By bridging cognitive reasoning and deep reinforcement learning, this work contributes a novel approach to AI-augmented cybersecurity, illustrating how human-AI collaboration can enhance real-time SOC decision-making. These findings pave the way for next-generation SOCs, where adaptive AI systems and human expertise jointly strengthen cloud security defenses against evolving cyber threats.

\section*{Acknowledgments}
The authors acknowledge the support of the Rutgers University Institutional Review Board (IRB) for approving the human-subject research conducted as part of this study. The research involving participants on Amazon Mechanical Turk (MTurk) was reviewed and approved under IRB Protocol No. Pro2024000556, determined to be minimal risk and granted expedited review status. All participants provided informed consent before participation.



\bibliographystyle{IEEEtran}
%

\bibliography{IEEEabrv,main}


\clearpage
\appendix

\section{Appendix}
\label{sec:appendix}

\paragraph{Theorem 4.1}
\textit{As the number of nodes \( N \to \infty \) in the attack graph \( \mathcal{G} \), the \( Q \)-value function for the SOC analyst under the CHT-DQN policy is lower bounded by the \( Q \)-value function for the SOC analyst under DQN, assuming the attack strategy is stationary and known.}

\begin{proof}
Let \( \mathcal{G} \) be an attack graph with \( N \) nodes. The proof follows the steps below:

\textbf{1. Boundedness of Utilities and Discounted Rewards:}
   We establish that the utility function and the expected cumulative reward remain bounded for both CHT-DQN and standard DQN, ensuring comparability as \( N \to \infty \).

\textbf{2. Inductive Argument:}
   Assume at step \( t \), the expected utility of the SOC analyst under CHT-DQN is greater than or equal to that under standard DQN:
   \[
   \mathbb{E}[R_{\mathcal{D}}(s) | \pi_{\mathcal{D}}^{\text{CHT-DQN}}] \geq \mathbb{E}[R_{\mathcal{D}}(s) | \pi_{\mathcal{D}}^{\text{DQN}}].
   \]
   By induction, this holds for all future steps.

\textbf{3. Value Function Comparison:} 
   Since the expected cumulative reward is positively correlated with the value function, it follows that:
   \[
   V^{\text{CHT-DQN}}_{\mathcal{D}}(s) \geq V^{\text{DQN}}_{\mathcal{D}}(s).
   \]

\textbf{4. Q-Value Function Relation:}
   Given the Bellman equation and the relationship between the value function and the Q-function, we conclude:
   \[
   Q^{\text{CHT-DQN}}_{\mathcal{D}}(s, a_{\mathcal{D}}) \geq Q^{\text{DQN}}_{\mathcal{D}}(s, a_{\mathcal{D}}).
   \]

Thus, the CHT-DQN policy provides a stronger lower bound on the SOC analyst’s $Q$-values compared to standard DQN, completing the proof.

\subsection{Boundedness of Utilities and Cumulative Discounted Reward}
We first establish that the utility functions and cumulative rewards are bounded. The data sizes \( \hat{b}_i \), \( b_i \), and costs \( c_{\mathcal{A},i}, c_{\mathcal{D},i} \) are finite and bounded for all \( i \). The weighting coefficients \( \alpha_{\mathcal{A},i}, \beta_{\mathcal{A},i}, \alpha_{\mathcal{D},i}, \beta_{\mathcal{D},i} \) are also finite. Additionally, the security status indicator \( \delta^t_i \) is binary (\( 0 \) or \( 1 \)). Therefore, the attacker’s and SOC analyst’s utilities, \( u^t_{\mathcal{A}} \) (or \( R^t_{\mathcal{A}} \)) and \( u^t_{\mathcal{D}} \) (or \( R^t_{\mathcal{D}} \)), remain bounded for any \( t \) and any number of nodes \( N \).

The cumulative discounted reward for a policy \( \pi_{\mathcal{D}} \) starting from state \( s \) is:
\begin{equation}
    V^\pi_{\mathcal{D}}(s) = \mathbb{E} \left[ \sum_{t=0}^{\infty} \gamma^t R^t_{\mathcal{D}}(s^t, a^t_{\mathcal{D}}, s^{t+1}) \mid s^0 = s \right],
\end{equation}
where \( \gamma \) is the discount factor (\( 0 \leq \gamma < 1 \)). Since \( R_{\mathcal{D}} \) consists of bounded utilities \( u^t_{\mathcal{A}} \) and \( u^t_{\mathcal{D}} \), the series \( \sum_{t=0}^{\infty} \gamma^t R^t_{\mathcal{D}}(s^t, a^t_{\mathcal{D}}, s^{t+1}) \) is a discounted sum of bounded terms. A discounted sum converges when \( \gamma < 1 \), ensuring that \( V^\pi_{\mathcal{D}}(s) \) remains bounded for any policy \( \pi_{\mathcal{D}} \) and initial state \( s \). Consequently, as \( N \to \infty \), the expected cumulative reward remains finite, allowing for a direct comparison of CHT-DQN and DQN performance.

\subsection{Proof by Induction: CHT-DQN Lower Bound on DQN}
We proceed by induction to show that the expected utility of the SOC analyst under the CHT-DQN policy is lower bounded by that under the DQN policy.

The expected reward (or utility) of the SOC analyst under CHT-DQN and DQN policies are defined as:

\begin{equation}
    \mathbb{E}[R_{\mathcal{D}}^{\text{CHT-DQN}}] = \sum_{a_{\mathcal{D}}, a_{\mathcal{A}}, s'} \mathbb{P}(s', a_{\mathcal{A}} | s, a_{\mathcal{D}}) \cdot R_{\mathcal{D}}(s, a_{\mathcal{D}}, a_{\mathcal{A}}, s'),
\end{equation}

\begin{equation}
    \mathbb{E}[R_{\mathcal{D}}^{\text{DQN}}] = \sum_{a_{\mathcal{D}}, s'} \mathbb{P}(s' | s, a_{\mathcal{D}}) \cdot R_{\mathcal{D}}(s, a_{\mathcal{D}}, s').
\end{equation}

For the CHT-DQN case, the probability distribution over the attacker's actions \( a_{\mathcal{A}} \) is conditioned on the SOC analyst’s policy, whereas for standard DQN, the attacker's strategy is treated as independent. Since the CHT-DQN explicitly integrates attacker strategy modeling, it accounts for adversarial responses more effectively than a purely reactive DQN.

\textbf{Base Case:} For a small attack graph with \( N = 1 \), both policies make decisions based on immediate expected rewards. Since both rely on similar reinforcement learning updates, the difference in expected reward is minimal.

\textbf{Inductive Step:} Assume that for some \( N = k \), the expected utility satisfies:
\begin{equation}
    \mathbb{E}[R_{\mathcal{D}}^{\text{CHT-DQN}}] \geq \mathbb{E}[R_{\mathcal{D}}^{\text{DQN}}].
\end{equation}
For \( N = k+1 \), the transition probabilities \( \mathbb{P}(s' | s, a_{\mathcal{D}}) \) in DQN remain fixed, whereas in CHT-DQN, they dynamically adjust to anticipated attacker behaviors. By incorporating a probabilistic model of adversary decision-making, CHT-DQN optimizes long-term expected rewards better than the standard DQN. Hence, the bound is preserved, proving that:
\begin{equation}
    \forall N, \quad \mathbb{E}[R_{\mathcal{D}}^{\text{CHT-DQN}}] \geq \mathbb{E}[R_{\mathcal{D}}^{\text{DQN}}].
\end{equation}

Thus, as \( N \to \infty \), the CHT-DQN policy ensures that the SOC analyst's performance remains at least as effective as, if not superior to, a model-free DQN approach.
\subsubsection{Base Case: \( N = 1 \)}
For an attack graph \( \mathcal{G} \) with a single node \( N = 1 \), the policies of CHT-DQN and standard DQN are equivalent, meaning the expected utilities satisfy:
\begin{equation}
    \mathbb{E}[R_{\mathcal{D}}^{\text{CHT-DQN},1}] = \mathbb{E}[R_{\mathcal{D}}^{\text{DQN},1}].
\end{equation}

\subsubsection{Inductive Hypothesis}
Assume that for \( N = k \), the expected reward for the SOC analyst under CHT-DQN is greater than that under DQN:
\begin{equation}
    \mathbb{E}[R_{\mathcal{D}}^{\text{CHT-DQN},k}] > \mathbb{E}[R_{\mathcal{D}}^{\text{DQN},k}].
\end{equation}
Expanding this expectation, we obtain:
\begin{equation}
    \sum_{a_{\mathcal{D}}, s'} \sum_{a_{\mathcal{A}}=1}^{k} \mathbb{P}(s', a_{\mathcal{A}} | s, a_{\mathcal{D}}) R_{\mathcal{D}}(s, a_{\mathcal{D}}, a_{\mathcal{A}}, s') 
    > \sum_{a_{\mathcal{D}}, s'} \mathbb{P}(s' | s, a_{\mathcal{D}}) R_{\mathcal{D}}(s, a_{\mathcal{D}}, s').
\end{equation}

\subsubsection{Inductive Step: \( N = k+1 \)}
For \( N = k+1 \), we aim to show that:
\begin{equation}
    \mathbb{E}[R_{\mathcal{D}}^{\text{CHT-DQN},k+1}] > \mathbb{E}[R_{\mathcal{D}}^{\text{DQN},k+1}].
\end{equation}
Expanding the expectation:
\begin{equation}
    \mathbb{E}[R_{\mathcal{D}}^{\text{CHT-DQN},k+1}] =
    \sum_{a_{\mathcal{D}}, s'} \sum_{a_{\mathcal{A}}=1}^{k+1} \mathbb{P}(s', a_{\mathcal{A}} | s, a_{\mathcal{D}}) R_{\mathcal{D}}(s, a_{\mathcal{D}}, a_{\mathcal{A}}, s').
\end{equation}
Splitting the summation:
\begin{equation}
    \mathbb{E}[R_{\mathcal{D}}^{\text{CHT-DQN},k+1}] =
    \sum_{a_{\mathcal{D}}, s'} \left[ 
    \sum_{a_{\mathcal{A}}=1}^{k} \mathbb{P}(s', a_{\mathcal{A}} | s, a_{\mathcal{D}}) R_{\mathcal{D}}(s, a_{\mathcal{D}}, a_{\mathcal{A}}, s')
    + \mathbb{P}(s', a_{\mathcal{A}} = k+1 | s, a_{\mathcal{D}}) R_{\mathcal{D}}(s, a_{\mathcal{D}}, a_{\mathcal{A}}=k+1, s')
    \right].
\end{equation}
Since the additional node’s impact is limited as \( N \) increases, the relative gain from including \( N = k+1 \) remains non-negative:
\begin{equation}
    \mathbb{E}[R_{\mathcal{D}}^{\text{CHT-DQN},k+1}] - \mathbb{E}[R_{\mathcal{D}}^{\text{CHT-DQN},k}]
    > \mathbb{E}[R_{\mathcal{D}}^{\text{DQN},k+1}] - \mathbb{E}[R_{\mathcal{D}}^{\text{DQN},k}].
\end{equation}
Thus, by induction, CHT-DQN outperforms DQN for all \( N \).

\subsection{Correlation Between Expected Cumulative Reward and Value Function}
Since \(\mathbb{E}[R_{\mathcal{D}}(s) | \pi_{\mathcal{D}}^{\text{CHT-DQN}}] > \mathbb{E}[R_{\mathcal{D}}(s) | \pi_{\mathcal{D}}^{\text{DQN}}]\) holds for all \( s \), we extend this to the value function:
\begin{equation}
    V^{\text{CHT-DQN}}_{\mathcal{D}}(s) = \mathbb{E}\left[\sum_{t=0}^{\infty} \gamma^t R^t_{\mathcal{D}} | s^0 = s, \pi_{\mathcal{D}}^{\text{CHT-DQN}}\right].
\end{equation}
Similarly, for DQN:
\begin{equation}
    V^{\text{DQN}}_{\mathcal{D}}(s) = \mathbb{E}\left[\sum_{t=0}^{\infty} \gamma^t R^t_{\mathcal{D}} | s^0 = s, \pi_{\mathcal{D}}^{\text{DQN}}\right].
\end{equation}
Since CHT-DQN has a higher expected reward at every step, its cumulative discounted reward remains higher across all time horizons:
\begin{equation}
    V^{\text{CHT-DQN}}_{\mathcal{D}}(s) > V^{\text{DQN}}_{\mathcal{D}}(s).
\end{equation}

\subsection{Relation Between Value Function and \( Q \)-Function}
The value function \( V_{\mathcal{D}}(s) \) is derived from the optimal Q-value:
\begin{equation}
    V_{\mathcal{D}}(s) = \max_{a_{\mathcal{D}}} Q_{\mathcal{D}}(s, a_{\mathcal{D}}).
\end{equation}
For CHT-DQN:
\begin{equation}
    Q_{\mathcal{D}}^{(k)}(s, a_{\mathcal{D}}) = \mathbb{E}_{\pi_{\mathcal{A}}^{(k-1)}} 
    \left[ R_{\mathcal{D}}(s, a_{\mathcal{D}}, a_{\mathcal{A}}) + \gamma \max_{a'_{\mathcal{D}}} Q_{\mathcal{D}}^{(k-1)}(s', a'_{\mathcal{D}}) \right].
\end{equation}
Incorporating attacker influence:
\begin{equation}
    Q_{\mathcal{D}}^{(k)}(s, a_{\mathcal{D}}) = \sum_{s'} \sum_{a_{\mathcal{A}}=1}^N 
    \mathbb{P}(s' | s, a_{\mathcal{D}}, a_{\mathcal{A}}) \sigma(\beta Q_{\mathcal{A}}^{(k-1)}(s, a_{\mathcal{A}}))
    \left[ R_{\mathcal{D}}(s, a_{\mathcal{D}}, s') + \gamma \max_{a'_{\mathcal{D}}} Q_{\mathcal{D}}^{(k)}(s', a'_{\mathcal{D}}) \right].
\end{equation}
Since CHT-DQN optimally models attacker behavior, it yields higher Q-values than standard DQN:
\begin{equation}
    Q^{\text{CHT-DQN}}_{\mathcal{D}}(s, a_{\mathcal{D}}) > Q^{\text{DQN}}_{\mathcal{D}}(s, a_{\mathcal{D}}).
\end{equation}

This completes the proof, confirming that CHT-DQN consistently achieves superior security performance compared to DQN as the number of nodes \( N \) increases.
\end{proof}

\section*{Supplementary Material}
Additional resources, including gameplay videos illustrating the interactive SOC defense scenarios, are available at the following link:
\url{https://drive.google.com/drive/folders/18H1pSf0wh8f-QdhBs3V2P1C90aHeUjQi?usp=sharing}

\end{document}